\newcommand{\cmark}{\ding{51}}%
\newcommand{\xmark}{\ding{55}}%
\algnewcommand\algorithmicforeach{\textbf{for each}}
\crefname{section}{§}{§§}
\Crefname{section}{§}{§§}
\keywords{Heterogeneous GPU cluster, fairness, scheduling, deep learning training}
\begin{document}

\title{Optimal Resource Efficiency with Fairness in Heterogeneous GPU Clusters}

\author{Zizhao Mo}
\affiliation{
	\institution{University of Macau}
	\country{Macau SAR, China}
}
\email{yc17461@um.edu.mo}

\author{Huanle Xu}
\affiliation{
	\institution{University of Macau}
	\country{Macau SAR, China}
}
\email{huanlexu@um.edu.mo}
\authornote{Corresponding author}

\author{Wing Cheong Lau}
\affiliation{
	\institution{The Chinese University of Hong Kong}
	\country{Hong Kong}
}
\email{wclau@ie.cuhk.edu.hk}

\begin{CCSXML}
<ccs2012>
<concept>
<concept_id>10010520.10010521.10010537.10003100</concept_id>
<concept_desc>Computer systems organization~Cloud computing</concept_desc>
<concept_significance>500</concept_significance>
</concept>
</ccs2012>
\end{CCSXML}

\ccsdesc[500]{Computer systems organization~Cloud computing}

\begin{abstract}

Ensuring the highest training throughput to maximize resource efficiency, while maintaining fairness among users, is critical for deep learning (DL) training in heterogeneous GPU clusters. However, current DL schedulers provide only limited fairness properties and suboptimal training throughput, impeding tenants from effectively leveraging heterogeneous resources. The underlying design challenge stems from inherent conflicts between efficiency and fairness properties.

In this paper, we introduce OEF, a new resource allocation framework specifically developed for achieving optimal resource efficiency and ensuring diverse fairness properties in heterogeneous GPU clusters. By integrating resource efficiency and fairness within a global optimization framework, OEF is capable of providing users with maximized overall efficiency, as well as various guarantees of fairness, in both cooperative and non-cooperative environments. We have implemented OEF in a cluster resource manager and conducted large-scale experiments, showing that OEF can improve the overall training throughput by up to 32\% while improving fairness compared to state-of-the-art heterogeneity-aware schedulers.
\end{abstract}


\maketitle

\section{Introduction}
\label{introduction}

Modern deep-learning (DL) tasks are rapidly being applied across various domains, including object detection~\cite{he2017mask}, natural language processing~\cite{vaswani2017attention}, and video processing~\cite{shafiee2017fast}. To meet the increasing demand for training DL tasks, major companies have established multi-tenant GPU clusters where multiple users need to share GPU resources. Additionally, production clusters today feature a variety of accelerator types due to the rapid pace of accelerator updates. For example, in Google Cloud Platform~\cite{googlegpu}, seven types of GPUs coexist: A100, V100, P100, K80, T4, P4, and L4. While heterogeneous clusters provide users with more choices in terms of pricing and efficiency trade-offs for DL training, the complex runtime profiles of users—specifically, diverse training throughput across GPU devices—pose significant challenges in achieving high resource efficiency~\cite{tiresias, OptimusDL, le2020allox, gavel} and ensuring fairness~\cite{Gandivafair, gavel, themis} in resource allocation among users.

From an operator's perspective, it is always imperative to provide users with the highest overall training throughput for the best resource efficiency. DL jobs typically have significantly different speedups of training throughput between GPU devices~\cite{gavel, le2020allox, Gandivafair}. Specifically, the speedup ranges from almost no speedup to several times due to architectural differences between DL models. For high resource efficiency, it is desirable to allocate jobs with the most speedup on high-end GPUs to produce the highest training throughput. However, this can easily lead to extremely unbalanced efficiency, i.e., jobs with great speedup monopolize the fast GPU whereas other jobs have to be starved or assigned to low-end devices. As a consequence, simply optimizing resource efficiency grants negative \textit{sharing incentive} to jobs with low speedup, and encourages users to cheat on their job speedup rate to win allocations on fast GPUs. In this sense, pure efficiency maximization shall violate \textit{strategy-proofness}, which is an important fairness requirement as it can prevent users from lying about their requirements to improve their training performance.

From a user's perspective, fairness has always been an important issue. A big challenge herein is that existing fairness schedulers are no-longer suitable for heterogeneous GPU clusters. First, the widely adopted fairness principle \textit{max-min fairness}~\cite{max-min-fairness, liu1973scheduling, feitelson2015workload} and its variants in operating system~\cite{baruah1993proportionate, baruah1995fast, zhu2003multiple} and network domains~\cite{blanquer2001fair, kleinberg1999fairness} are tailored for resources of the same type. While this principle aims to equitably distribute resources, it can have an adverse impact on overall resource efficiency due to variations in speedup levels among different jobs. Second, GPUs are considered \emph{interchangeable} resources for DL jobs, as explained in \cref{heterogeneity}. Therefore, we cannot directly apply existing multi-resource fairness schedulers~\cite{DRF, quincy, carbyne, hug, khamse2017per, wang2014multi, wang2016multi} that allocate non-interchangeable resources (such as CPU, memory, or network) to heterogeneous GPU scheduling. In GPU clusters, it remains an open problem to design an efficient and fair policy that can offer users multiple fairness properties, including \textit{envy-freeness} (where users do not prefer others' allocations over their own) and \textit{sharing-incentive} (which incentivizes users to share resources), in addition to \textit{strategy-proofness}~\cite{gavel}.

In this paper, our focus is to address a fundamental problem: maximizing overall resource efficiency while meeting various fairness requirements for scheduling with heterogeneous GPUs. Existing researches have proposed efficient scheduling methods in this area~\cite{gavel, le2020allox, Gandivafair}, but these approaches primarily focus on training efficiency, leading to fairness violations. Furthermore, even schedulers that prioritize fairness can only guarantee \textit{sharing-incentive} and \textit{pareto-efficiency}~\cite{gavel,Gandivafair}. Among them, Gandiva$_{\mbox{fair}}$ shares a similar objective to ours~\cite{Gandivafair}. Specifically, it aims to strike a balance between efficiency and fairness by implementing a trading mechanism based on \textit{max-min} fairness~\cite{jaffe1981bottleneck}. However, Gandiva$_{\mbox{fair}}$ is far from optimal as it fails to address cheating or ensure \textit{envy-freeness}.

We have identified the underlying reasons why current schedulers fall short in delivering adequate fairness properties and optimal efficiency. First, from a methodological perspective, these schedulers tightly couple their allocation schemes to specific fairness properties, attempting to integrate other properties in a best-effort manner without global coordination. Second, in principle, our analysis reveals inherent conflicts among different fairness properties in heterogeneous GPU scheduling. Pursuing high efficiency, for instance, proves incompatible with achieving \textit{strategy-proofness}, \textit{envy-freeness}, and \textit{sharing-incentive}. Simply prioritizing the latter two fairness properties can, in fact, result in more harm than good.

Based on the identification, we present OEF, a resource allocation framework capable of globally coordinating multiple fairness properties and maximum resource efficiency in the context of heterogeneity. At its core, OEF explicitly quantifies efficiency and fairness, and formulates optimization problems to explore the best coordination between them in both cooperative and non-cooperative environments. These environments have been extensively studied in the scheduling literature to address fairness concerns~\cite{hug}. In non-cooperative environments, such as multi-tenant platforms where individuals or entities need to compete for limited resources without a pricing mechanism, ensuring \textit{strategy proofness} is imperative. Non-cooperative OEF incentivizes users to reveal the true training throughput in their DL jobs across various devices, facilitating optimized resource efficiency concurrently. In cooperative environments where \textit{strategy-proofness} might not be a rigid requirement, OEF simultaneously achieves \textit{envy-freeness} and \textit{sharing-incentive} while guaranteeing optimal efficiency.

Moreover, OEF demonstrates its extensibility by supporting complex scheduling scenarios, such as accommodating users with varying priorities and enabling a user to run multiple types of DL tasks simultaneously. Furthermore, OEF only assigns GPUs with adjacent speedups to each user, thereby mitigating the adverse effect of skewed training speed across GPU types. We implemented a prototype system of OEF in a local cluster comprising 24 heterogeneous GPU devices to assess its performance. The results demonstrate that OEF successfully achieves desired fairness properties while outperforming state-of-the-art baselines. It provides users with a higher overall training throughput in both non-cooperative and cooperative settings, with improvements of up to 32\%. Additionally, OEF significantly reduces the overall job completion time in the long term, achieving reductions of up to 19\%. In summary, we have made the following contributions in this paper:

\begin{itemize}
    \item We identify the  limitations of existing fairness frameworks tailored specifically for heterogeneous GPU clusters and highlight the inherent conflicts between resource efficiency and diverse fairness properties in the context of heterogeneous GPU scheduling.
    \item We design OEF, the first heterogeneity-aware resource allocation framework that can achieve \textit{strategy-proofness} and \textit{envy-freeness} while pursuing high resource efficiency. 
    Additionally, OEF exhibits flexibility by accommodating users with varying priorities and  multiple types of training workloads.
    \item We demonstrate that OEF exhibits high resource efficiency, excellent scalability and introduces minimal computational overhead when identifying the optimal allocation in large-scale clusters.
\end{itemize}

\section{Background and Motivation}
\label{sec:motivation}
\subsection{Multi-tenant GPU Clusters}
\label{multitenantscheduling}

Today, GPU clusters are often shared among multiple tenants to reduce the operating costs and in the meanwhile, improve resource utilization~\cite{Gandivafair}. To this end, a large number of DL jobs from tenants need to be served in production GPU clusters~\cite{mlaasworkload,hu2021characterization}.  As a representative use case, to train an accurate model for a given task, one may first run a large number of related training jobs with different combinations of hyper-parameters, e.g., learning rate, batch size, or dropout rate, to explore the best combination that can yield the highest accuracy. Since these jobs are of similar model structure, they are (almost) equally accelerated given the same GPU. 
As reported by~\cite{mlaasworkload}, approximately 90\% of jobs in Alibaba clusters are recurring hyperparameter search jobs. Furthermore, tenants may concurrently experiment with different DL models for the same tasks, where the acceleration effect of jobs may vary significantly due to architectural differences. Therefore, schedulers should be adaptable to such situations and function effectively to maximize efficiency.

\begin{figure}
\begin{minipage}{0.49\linewidth}
\centering
\includegraphics[width=0.99\linewidth]{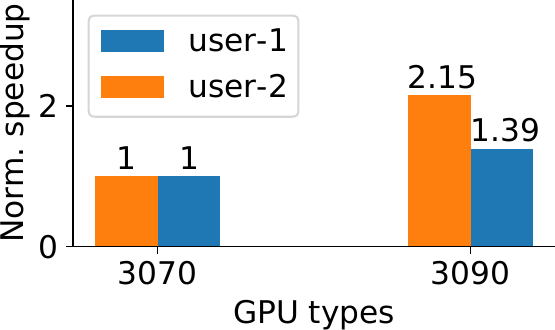}
\subcaption{{Diverse speedup for users with different DL training jobs}}
\end{minipage}
\begin{minipage}{0.49\linewidth}
\centering
\includegraphics[width=0.99\linewidth]{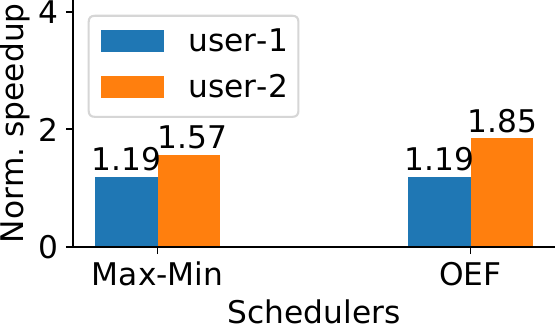}
\subcaption{{Diverse speedup for users under different fair schedulers}}
\end{minipage}
\caption{The effect of GPU heterogeneity in DL training clusters, speedup denotes the training throughput normalized by that on the slowest GPU type. In this illustration, user-1 executes a VGG model, while user-2 engages in running an LSTM model.}
\label{fig:motivation_hetero}
\end{figure}

\subsection{Diverse Speedup under GPU Heterogeneity}
\label{heterogeneity}
The GPU heterogeneity in production clusters complicates scheduling among tenants. Specifically, DL training models typically show heterogeneous acceleration performance across GPU types due to architectural differences between models. Fig.~\ref{fig:motivation_hetero}(a) exhibits the skewed acceleration effect across jobs normalized by the training speed on the slowest GPU type. We can observe that VGG achieves $1.39\times$ speedup on RTX 3090. By contrast, LSTM achieves much higher speedups on more advanced GPUs, i.e., $2.15\times$ speedup on 3090.

\subsection{Fair Allocation in Heterogeneous Environment}
\label{fairnees-definition}
Given $k$ types of GPUs with $m_k$ devices each, and the training throughput of DL jobs on these $k$ types from $n$ different tenants (users), the cluster manager must use a scheme $\mathcal{S}$ to determine the allocation matrix $X$, which specifies the number of each type of GPU devices assigned to each tenant:
\begin{equation*}
\mathcal{S}\Big(\big\{\vec{W_1};\vec{W_2};\cdots;\vec{W_n}\big\}\Big) = X = \Big\{\vec{x_1};\vec{x_2};\cdots;\vec{x_n}\Big\}.
\end{equation*}
$\vec{W_l} = \left \langle w_l^1, w_l^2,\cdots, w_l^k\right\rangle$ represents the vector of training throughput for the $l$-th tenant. $\vec{x_l} = \left \langle x_l^1,x_l^2,\cdots,x_l^k\right\rangle$ is the allocation vector, here, $x_l^j$ denotes the number of devices of type $j$. To better formulate fairness, we normalize the throughput vector to the training throughput on the slowest GPUs, i.e., $w_l^1 = 1$ for all tenants \footnote{Due to hardware evolution, the slowest GPU type for different DL jobs is consistent, i.e. RTX 3070 is always slower than RTX 3090 when training various DL jobs.}. In this sense, $\vec{W_l}$ captures the speedup of different GPU types relative to the slowest type, also known as the speedup vector, and $W = \big\{\vec{W_1};\vec{W_2};\cdots;\vec{W_n}\big\}$ is the speedup matrix capturing all speedup information for all users.

\subsubsection{Fairness requirements}
As defined in previous works \cite{DRF,hug,themis}, a fair allocation scheme needs to meet four requirements:
\begin{enumerate}
\item
\noindent \textbf{\textit{Strategy-proofness}} ($\mathsf{SP}$): Tenants should not be able to benefit by lying about their speedups. In this sense, when the $l$-th tenant reports a fake vector $\vec{W_l}$ where each element is no less than that in the groundtruth, his attained normalized training throughput will decrease. This property serves as a deterrent, effectively preventing tenants from engaging in cutthroat competition for precious high-end GPU resources.

\item \textbf{\textit{Envy-freeness}} ($\mathsf{EF}$): A tenant should not prefer the allocation of another tenant, thereby ensuring equitable treatment for all tenants.

\item \textbf{\textit{Sharing-incentive}} ($\mathsf{SI}$): A tenant should obtain higher training throughput in a cluster partition consisting of $\frac{1}{n}$ of all types of GPU devices. Without this attribute, tenants would rather prefer an exclusive GPU cluster.
\item \textbf{\textit{Pareto-efficiency}} ($\mathsf{PE}$): It should not be possible to increase the training throughput of a user without decreasing the throughput of at least another user\footnote{We follow the original definition of PE given by DRF~\cite{DRF}. This definition is stronger than the one introduced in Gavel~\cite{gavel}, where the allocation is work conserving but throughput improvement can still be achieved via trading GPU shares among users.}. 
\end{enumerate}

\subsubsection{Optimal resource efficiency}
As demonstrated in extensive real-world studies~\cite{jeon2019analysis, mlaasworkload}, it is well-established that DL training jobs often span hours or even days, presenting significant opportunities for acceleration. This emphasizes the critical need to integrate resource efficiency in order to achieve optimized training throughput. In light of this requirement, the co-optimization of efficiency and fairness has the potential to further bolster the effectiveness of a scheduling scheme solely focused on fairness.

In particular, it is desirable to strive for the highest overall training throughput, referred to as \textit{optimal efficiency} while also taking various fairness considerations into account. It is important to clarify that this concept differs from \textit{pareto-efficiency}, as \textit{pareto-efficiency} allocations may not always lead to \textit{optimal efficiency}.

\subsubsection{Why need new fair schedulers?}
 \label{low-efficiency}
While Max-Min based schedulers excel in offering users various fairness attributes, they fall short in delivering satisfactory throughput performance to tenants in heterogeneous clusters, resulting in diminished overall resource efficiency. Illustrated in Fig.~\ref{fig:motivation_hetero}(b), User-2 experiences a training throughput decline of $1.18\times$ compared to that achieved under our proposed scheduler OEF, which concurrently maintains various fairness properties in a heterogeneity environment (elaborated in \cref{property-oef}). In aggregate, the cluster's overall efficiency contracts by approximately 10\%. This inefficiency arises because Max-Min fairness treats DL training jobs from tenants equally, inadequately leveraging high-end resources.

Another widely-adopted fairness schedulers, i.e., Dominant Resource Fairness (DRF)~\cite{DRF} and its variants~\cite{quincy, carbyne, hug, khamse2017per, wang2014multi, wang2016multi}, are also unfit for heterogeneous clusters. They primarily allocate resources proportionally to a user's requested resources, such as computation, storage, and networking. However, there is a significant distinction between our scenario and the multi-resource setting: GPUs of various types actually belong to the same category, namely computation resources. In other words, DL jobs can be executed on any type of GPU device, regardless of differences in training performance—an attribute known as \textit{interchangeability}~\cite{le2020allox}. On the contrary, a job that requires network resources cannot function if no network bandwidth is granted. Consequently, DRF-based schemes are not suitable for this new scenario, necessitating the development of innovative fairness solutions.


\subsection{Limitation of Existing Heterogeneous Schedulers}
\label{inefficiencies}

\noindent\textbf{Gandiva$_{\mbox{fair}}$} designs a trading mechanism on top of max-min fairness. Specifically, Gandiva$_{\mbox{fair}}$ first equally divides GPU resources to users according to max-min fairness. Then, it performs greedy trading between users, that is, during the trading process, the fastest-accelerating users on high-end GPUs use a second-price auction mechanism to trade their shares of slow GPUs for shares of fast GPUs using the second price auction mechanism~\cite{vickrey}, where the second price denotes the speedup rate on fast GPUs of the second most accelerated user. Since this trading mechanism can only increases each user's training throughput, Gandiva$_{\mbox{fair}}$ satisfies \textit{sharing-incentive}. In addition, the second-price auction mechanism also prevents the fastest-accelerating user from lying about his speedup on fast GPUs. Consider an example where we have three users $u_1$, $u_2$, and $u_3$. The speedup matrix indicating the acceleration effect on diverse GPU types among jobs and the allocation matrix are presented as $W$ and $X$ respectively:
\begin{equation}
\label{efficieny_allo_example}
W =
\begin{blockarray}{ccc}
\mbox{GPU}_1 & \mbox{GPU}_2 & \\
\begin{block}{(cc)c}
1 & 2 & u_1 \\
1 & 3 & u_2 \\
1 & 4 & u_3 \\
\end{block}
&&
\end{blockarray},
X =
\begin{blockarray}{ccc}
\mbox{GPU}_1 & \mbox{GPU}_2 & \\
\begin{block}{(cc)c}
1 & 0.09 & u_1 \\
0 & 0.47 & u_2 \\
0 & 0.44 & u_3 \\
\end{block}
&&
\end{blockarray}.
\end{equation}
And the achieved efficiency vector for users are denoted as $E=\langle1.18; 1.41; 1.76 \rangle$. 
The trading mechanism benefits all three users, leading to a much higher overall resource efficiency than the original max-min fairness scheduler. Gandiva$_{\mbox{fair}}$ is \textit{pareto-efficient}, however, it violates \textit{envy-freeness}, as the algorithm proceeds greedily, regardless of inter-user constraints. In this example, $u_3$ prefers $u_2$'s allocation. 

Another  limitation of Gandiva$_{\mbox{fair}}$ is that it cannot completely prevent all users from lying. Suppose $u_1$ artificially increases the speedup value on GPU$_2$ from 2 to 2.8. In this case, the price in the second-round trading changes from 2.5 to 2.9, so $u_1$ successfully wins more portions on GPU$_2$ and improves his efficiency, violating \textit{strategy-proofness}, as shown below:
\begin{equation*}
W^f =
\begin{blockarray}{ccc}
\mbox{GPU}_1 & \mbox{GPU}_2 & \\
\begin{block}{(cc)c}
1 & 2.8 & u_1 \\
1 & 3 & u_2 \\
1 & 4 & u_3 \\
\end{block}
&&
\end{blockarray},
X^f =
\begin{blockarray}{ccc}
\mbox{GPU}_1 & \mbox{GPU}_2 & \\
\begin{block}{(cc)c}
1 & 0.11 & u_1 \\
0 & 0.45 & u_2 \\
0 & 0.44 & u_3 \\
\end{block}
&&
\end{blockarray}.
\end{equation*}
In this case, the achieved efficiency vector becomes $E^f=\langle 1.22; 1.35; 1.76 \rangle$. Moreover, one can simply construct a more resource-efficient allocation that satisfies both \textit{envy-freeness} and \textit{sharing-incentive} properties, i.e.:
\begin{equation}
\label{optimal-allo-sharing}
X^{*} =
\begin{blockarray}{ccc}
\mbox{GPU}_1 & \mbox{GPU}_2 & \\
\begin{block}{(cc)c}
1 & 0 & u_1 \\
0 & 0.5 & u_2 \\
0 & 0.5 & u_3 \\
\end{block}
&&
\end{blockarray},
E^{*} = \begin{blockarray}{c}
\begin{block}{(c)}
1 \\
1.5 \\
2 \\
\end{block}
\end{blockarray}.
\end{equation}
Without cheating, this allocation achieves much higher resource efficiency than Gandiva$_{\mbox{fair}}$, indicating Gandiva$_{\mbox{fair}}$ fails to provide optimal resource efficiency under \textit{sharing-incentive} constraints.

\noindent\textbf{Gavel} proposes a completely different approach by trying to speed up the least incentivized users in the cluster. In this sense, the ratio of efficiency obtained from Gavel to that under a max-min fair share is equalized across users. This policy is naturally \textit{sharing-incentive}. However, Gavel falls short to provide high resource efficiency and \textit{envy-freeness}. To be specific, consider the same speedup matrix as in Expression~\eqref{efficieny_allo_example}, the allocation and the related efficiency under Gavel is given by:
\begin{equation}
X =
\begin{blockarray}{ccc}
\mbox{GPU}_1 & \mbox{GPU}_2 & \\
\begin{block}{(cc)c}
0.91 & 0.09 & u_1 \\
0.09 & 0.45 & u_2 \\
0 & 0.45 & u_3 \\
\end{block}
&&
\end{blockarray} , \ \
E = \begin{blockarray}{c}
\begin{block}{(c)}
1.09 \\
1.44 \\
1.8 \\
\end{block}
\end{blockarray}, \ \
\end{equation}
where $u_3$ prefers $u_2$'s allocation and the total resource efficiency attained is even lower than that under Gandiva$_{\mbox{fair}}$. Similarly, users under Gavel also have incentive to cheat on their speedup rate. For example, when $u_1$ increase his speedup rate on GPU$_2$ to 2.5, his resulted share vector become $\langle 0.85, 0.15 \rangle$, leading to higher resource efficiency. In this sense, Gavel does not achieve \textit{strategy-proofness} either. Moreover, $u_2$ can trade with $u_1$ to obtain higher efficiency without hurting anyone, that is, $x'_1=\langle 1, 0.045 \rangle$ and $x'_2=\langle 0, 0.0495 \rangle$. This demonstrates the \textit{pareto-inefficiency} and non-optimal resource efficiency under Gavel.

\begin{table}[!tb]
\centering
\caption{Properties guaranteed by existing schedulers.}
\begin{tabular}{|c|c|c|c|c|c|}
\hline
 & $\mathsf{PE}$ & $\mathsf{EF}$ & $\mathsf{SI}$ & $\mathsf{SP}$ & \textit{optimal efficiency}\\
\hline
Gavel~\cite{gavel} &  \xmark & \xmark & \cmark & \xmark & \xmark \\
\hline
Gandiva$_{\mbox{fair}}$~\cite{Gandivafair}  & \cmark & \xmark & \cmark & \xmark & \xmark \\
\hline
OEF & \cmark & \cmark & \cmark & \cmark  & \cmark \\
\hline
\end{tabular}
\label{tab:propertiestable}
\end{table}

Upon examining existing schedulers, we find that, in addition to not achieving optimal efficiency, they fail to ensure various critical fairness properties outlined in Tab~\ref{tab:propertiestable}. One key reason is the tight coupling of these schedulers with a single fairness objective, such as Gavel and Gandiva$_{\mbox{fair}}$, aim to make their scheme sharing-incentive. These schedulers then try to enhance resource efficiency on a best-effort basis by introducing a second-stage optimization. However, due to the absence of global coordination between resource efficiency and fairness within a unified framework, they are far from effective. Additionally, as these schedulers assume jobs of each user can be represented by a speedup vector, they are not compatible with cases where multiple types of jobs are simultaneously trained by the same user, which consequently reduces their practicality.

\section{OEF Overview}
\subsection{Design Challenges}
\label{sec:challenge}

\subsubsection{Conflicts between efficiency and fairness}
\label{fair-efficiency}
The first challenges arise from the inherent conflicts between maximizing resource efficiency and maintaining fairness. 

{\small$\bullet$} \textit{Merely maximizing resource efficiency is detrimental.}
Maximizing the overall training throughput within the cluster can be formulated as the following objective:
\begin{equation}
\max \sum_{j=1}^k \sum_{l=1}^n w_l^j \cdot x_l^j, \quad \mbox{s.t.,} \sum_{l=1}^n x_l^j \leq m_j, \ \mbox{for all} \ j.
\end{equation}
However, optimizing this objective alone yields an outcome that consistently assigns the highest scheduling priority to jobs with the greatest speedup on a fast GPU, which is inherently unfair to users with slower speedups. As an example, consider three users ($u_1,u_2,u_3$) with different speedups sharing a cluster of two different GPUs as follows:
\begin{equation}
W =
\begin{blockarray}{ccc}
\mbox{GPU}_1 & \mbox{GPU}_2 & \\
\begin{block}{(cc)c}
1 & 2 & u_1 \\
1 & 3 & u_2 \\
1 & 4 & u_3 \\
\end{block}
&&
\end{blockarray}, 
X =
\begin{blockarray}{ccc}
\mbox{GPU}_1 & \mbox{GPU}_2 & \\
\begin{block}{(cc)c}
1 & 0 & u_1 \\
0 & 0 & u_2 \\
0 & 1 & u_3 \\
\end{block}
&&
\end{blockarray}.
\end{equation}
It is straightforward to show that the best allocation result that produces the highest overall efficiency is to assign GPU$_2$ to $u_3$ and assign GPU$_1$ to $u_1$ or $u_2$. However, this allocation totally neglects the resource demand of $u_2$ (or $u_1$), which is not \textit{sharing-incentive} since $u_2$ does not get $\frac{1}{3}$ of all resources in the cluster. This allocation is also not \textit{envy-free} as both $u_1$ and $u_2$ prefer $u_3$'s allocation. Moreover, such scheme encourages $u_1$ and $u_2$ to lie, as jobs with greater speedup would be more likely to be placed on fast GPUs. Therefore, such scheme does not satisfy \textit{strategy-proofness}. In this sense, purely maximizing resource efficiency is completely unfair to users with slow speedup. Additionally, since users have incentive to lie on their speedup profiles, the resource allocation process will eventually behave like max-min fairness, thereby greatly degrading the overall resource efficiency that can be achieved in the cluster.

{\small$\bullet$} \textit{Naively maintaining fairness properties reduces resource efficiency}. Placing excessive emphasis on fairness attributes such as \textit{envy-freeness} can actually hinder resource efficiency, leading to a situation akin to the classic prisoner's dilemma problem~\cite{hug}. To demonstrate this, let's consider an example where two users share a cluster comprising two distinct GPUs, characterized by a speedup matrix represented by the following matrix, labeled as $W$. When attempting to optimize resource efficiency and maintain \textit{envy-freeness} simultaneously, it results in the allocation $X$ as follows:
\begin{equation}
\label{speedup-matrix-example}
W =
\begin{blockarray}{ccc}
\mbox{GPU}_1 & \mbox{GPU}_2 & \\
\begin{block}{(cc)c}
1 & 2 & u_1 \\
1 & 5 & u_2 \\
\end{block}
&&
\end{blockarray},
X =
\begin{blockarray}{ccc}
\mbox{GPU}_1 & \mbox{GPU}_2 & \\
\begin{block}{(cc)c}
1 & 0.25 & u_1 \\
0 & 0.75 & u_2 \\
\end{block}
&&
\end{blockarray},
\end{equation}
by following the envy-freeness constraints for $u_1$ and $u_2$ listed as below:
\begin{equation*}
x_1^1 + 2 x_1^2 \geq x_2^1 + 2 x_2^2, \ \mbox{and} \quad x_2^1 + 5 x_2^2 \geq x_1^1 + 5 x_1^2.
\end{equation*}
The above allocation $X$ yields a total resource efficiency of 5.25. In contrast, when the first user cheats the cluster manager and lies about his speedup on GPU$_2$ up to 4, the resulted allocation will become $X^{l}$ listed below, which leads to a total throughput of 1.75 for $u_1$ and is 16.7\% higher than his original throughput. However, the overall efficiency will decrease from 5.25 to 4.875, implying that naively preserving \textit{envy-freeness} can do more harm than good. This example further suggests that after $u_1$ lies, another user $u_2$ also has the incentive to lie to increase his throughput. Finally, the result $X^{f}$ will be equivalent to the allocation under max-min fairness as shown below, which results in no better throughput for both users, that is, with 
$u_1$ obtaining 1.5 and $u_2$ achieving 3.
\begin{equation*}
\label{false-si}
X^{l} =
\begin{blockarray}{ccc}
\mbox{GPU}_1 & \mbox{GPU}_2 & \\
\begin{block}{(cc)c}
1 & 0.375 & u_1 \\
0 & 0.625 & u_2 \\
\end{block}
&&
\end{blockarray},
X^{f} =
\begin{blockarray}{ccc}
\mbox{GPU}_1 & \mbox{GPU}_2 & \\
\begin{block}{(cc)c}
0.5 & 0.5 & u_1 \\
0.5 & 0.5 & u_2 \\
\end{block}
&&
\end{blockarray},
\end{equation*}

Under the same speedup matrix, \textit{sharing-incentive} requires the allocation to matrix satisfy:
\begin{equation*}
x_1^1 + 2 x_1^2 \geq 1.5, \ \mbox{and} \quad x_2^1 + 4 x_2^2 \geq 3.
\end{equation*}
When $u_1$ lies about his speedup on GPU$_2$ to 4, his own efficiency improves whereas the overall resource efficiency drops by 8\%.


\subsubsection{Conflicts between various fairness properties}
\label{conflict_fairness}
The second challenge stems from the inherent conflicts between \textit{strategy-proofness} and other fairness properties when striving to maximize resource efficiency in a heterogeneous context. These conflicts are illuminated through the following theorems. This highlights that it is not feasible to provide users with both properties at the same time. Therefore,
we need to carefully integrate fairness properties with resource efficiency when designing resource
allocation schemes.

\begin{lemma}
\label{theo:lemma1}
To achieve optimal efficiency, in $\vec{x_1}$, which represents the allocation for the user with the slowest speedup, all non-zero elements are positioned at the leftmost positions and have a value of $m_j$, except for the right-most element. 
\end{lemma}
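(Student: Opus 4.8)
The plan is to prove the claim by a local-exchange (swap) argument applied to an efficiency-optimal allocation that has been chosen to be extremal with respect to a positional potential. First I would record two preliminaries. Any efficiency-optimal allocation is work-conserving: since every $w_l^j \ge 1 > 0$, leaving a GPU of any type idle strictly wastes throughput, so $\sum_{l} x_l^j = m_j$ for every $j$. Among all optimal allocations $X$ I would then single out one that minimizes the potential $\Phi(X) = \sum_{j=1}^{k} j \cdot x_1^j$, which measures how far to the right the slowest user's mass sits; a minimizer exists because both the efficiency objective and $\Phi$ are linear over a compact polytope. The goal then reduces to showing that this particular minimizer already exhibits the asserted staircase structure.

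The core step is the exchange. Suppose, toward a contradiction, that the structure is violated; concretely, there exist indices $j_1 < j_2$ with $x_1^{j_2} > 0$ yet $x_1^{j_1} < m_{j_1}$. This single condition captures both possible defects at once, namely an occupied fast slot sitting above an under-filled slow slot and a genuine ``hole'' below an occupied slot. By work-conservation, $\sum_{l} x_l^{j_1} = m_{j_1} > x_1^{j_1}$, so some user $r \neq 1$ holds $x_r^{j_1} > 0$. I would then transfer $\epsilon = \min(x_1^{j_2}, x_r^{j_1}) > 0$, moving user $1$ down and user $r$ up, i.e. replacing $(x_1^{j_1}, x_1^{j_2}, x_r^{j_1}, x_r^{j_2})$ by $(x_1^{j_1}+\epsilon,\, x_1^{j_2}-\epsilon,\, x_r^{j_1}-\epsilon,\, x_r^{j_2}+\epsilon)$. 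This swap leaves every column sum and every per-user total unchanged, and the choice of $\epsilon$ keeps all entries in $[0, m_j]$, so feasibility (and any fairness constraint expressed through those aggregates) is preserved. Its effect on total throughput is $\Delta = \epsilon\bigl[(w_r^{j_2} - w_r^{j_1}) - (w_1^{j_2} - w_1^{j_1})\bigr]$, while $\Phi$ strictly decreases by $\epsilon(j_2 - j_1) > 0$.

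To close the argument I would invoke the defining property of the slowest user: its marginal speedup gain between any slower and any faster GPU type is the smallest in the cluster, so $w_1^{j_2} - w_1^{j_1} \le w_r^{j_2} - w_r^{j_1}$ and hence $\Delta \ge 0$. Since $X$ was optimal we also have $\Delta \le 0$, forcing $\Delta = 0$; the swapped allocation is therefore still optimal but has strictly smaller $\Phi$, contradicting the minimality of $\Phi(X)$. Consequently no violating pair $(j_1, j_2)$ can exist. Writing $p$ for the largest index with $x_1^p > 0$, every $j < p$ must then satisfy $x_1^j = m_j$ and every $j > p$ must satisfy $x_1^j = 0$, which is exactly the asserted pattern: the non-zero entries occupy the leftmost positions and each equals $m_j$ except possibly the right-most one.

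I anticipate the main obstacle to be the marginal-gain inequality rather than the mechanics of the swap. Under mere component-wise domination ($w_1^j \le w_r^j$ for all $j$) the increment inequality across \emph{interior} GPU types $j_1 > 1$ need not follow, and a swap anchored at the globally slowest type cannot repair a hole located strictly inside the profile. I would therefore rely on the paper's standing assumption that GPU types carry a consistent speed order and that speedup profiles are totally ordered, from which increment domination for the slowest profile can be extracted. Making this hypothesis explicit, and confirming that it is precisely what ``the user with the slowest speedup'' is meant to encode, is the delicate part of the write-up; the remainder is the routine verification that the swap preserves feasibility and the potential argument terminates.
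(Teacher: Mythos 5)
Your exchange-and-potential argument is a genuinely different and far more explicit route than the paper's own proof, which consists of a single assertion: since $w_1^j=\min_l w_l^j$ for every $j$ and $w_1^p<w_1^q$ for $p<q$, the slowest user's allocation ``must'' be filled from the left. The paper never writes down the swap, the feasibility bookkeeping, or the inequality that makes the swap non-decreasing, so your write-up exposes exactly what a rigorous proof would have to establish. Unfortunately, the obstacle you flag at the end is a real gap, and it is not repaired by anything the paper assumes. The step $\Delta\ge 0$ requires the increment-domination inequality $w_1^{j_2}-w_1^{j_1}\le w_r^{j_2}-w_r^{j_1}$, whereas the paper only provides componentwise domination $w_1^j\le w_r^j$ together with the normalization $w_l^1=1$. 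For $j_1=1$ these do imply the needed inequality (both profiles start at $1$), so swaps anchored at the slowest GPU type are safe; for an interior pair they do not. Concretely, with $\vec{W_1}=\langle 1,2,3\rangle$ and $\vec{W_r}=\langle 1,2.9,3\rangle$ one has $w_1^3-w_1^2=1>0.1=w_r^3-w_r^2$, so your $\Delta$ is strictly negative and the swap that would close an interior hole strictly hurts the objective; indeed $\vec{x_1}=\langle 1,0,1\rangle$ can then be efficiency-optimal while violating the stated staircase. Note that this example is already totally ordered componentwise, so the ``standing assumption'' you hope to extract the inequality from does not in fact yield it.

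Two further cautions. First, your swap preserves column sums and each user's total \emph{number} of GPUs, but not each user's total \emph{throughput} $\vec{W_l}\cdot\vec{x_l}$; since the lemma is invoked inside Theorems~\ref{theo:ef_violation} and~\ref{theo:si_violation} precisely where user~1 carries a tight envy-freeness or sharing-incentive equality on that quantity, the claim that ``any fairness constraint expressed through those aggregates is preserved'' does not cover the constraints that actually matter downstream. Second, your argument (correctly) only establishes the structure for the $\Phi$-minimizing optimal allocation, i.e., a ``some optimal allocation'' statement plus a tie-breaking rule, whereas the downstream theorems reason about the allocation the mechanism outputs. To close the proof you must either add an explicit hypothesis that the slowest user's profile is dominated in increments rather than merely in levels, or restrict the swaps to $j_1=1$ and settle for the weaker conclusion that only the first column is provably saturated first. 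The paper's own proof silently makes the same leap, so the defect is inherited from the source rather than introduced by you; but as written, your proposal does not yet prove the lemma.
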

\begin{proof}
In order to maximize the overall efficiency, the lowest-end GPU quotas must be allocated user 1 who possesses the lowest speedup among all users. Referring to the definition of speedup in \cref{fairnees-definition}, it is evident that $w^j_1=\min_l w^j_l,\forall j\in [1,\cdots,k]$, and $w^p_1 < w^q_1, \ \forall 1\leq p < q \leq k$. These inequalities imply that the non-zero elements in $\vec{x_1}$ are filled from left to right, and $x^q_1 > 0$ holds only when $x^p_1=m_p, \forall p<q$.
\end{proof}

\begin{theorem}
\label{theo:ef_violation}
No sharing mechanism providing optimal resource efficiency can simultaneously guarantee \textit{envy-freeness} and \textit{strategy-proofness}.
\end{theorem}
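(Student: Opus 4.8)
\section*{Proof proposal for \cref{theo:ef_violation}}

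The plan is to prove the statement by exhibiting a single family of instances on which the three requirements collide; since \emph{strategy-proofness} is a property of the mechanism \emph{across} reports, the argument must compare the mechanism's forced output on a truthful profile against its output on an over-reported profile. I read ``providing optimal efficiency'' together with ``guaranteeing envy-freeness'' as: for every reported profile the mechanism returns the \emph{most efficient envy-free} allocation. I would restrict to two users and two GPU types with $m_1=m_2=1$ and a speedup matrix of the form $\vec{W_1}=\langle 1, a\rangle$, $\vec{W_2}=\langle 1, b\rangle$ with $1<a<b$, so that user~1 is the slow user and user~2 the fast one. By \cref{theo:lemma1}, any optimal-efficiency allocation must hand the slow GPU entirely to user~1, so the allocation collapses to a single free parameter $t\in[0,1]$, the share of the fast GPU given to user~1, with $\vec{x_1}=\langle 1,t\rangle$ and $\vec{x_2}=\langle 0,1-t\rangle$.

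The first key step is to pin the mechanism's output down exactly. The reported efficiency on the fast GPU is $a\,t+b\,(1-t)=b-(b-a)t$, strictly decreasing in $t$, so among envy-free allocations the mechanism must select the smallest feasible $t$. Writing out the two envy-freeness inequalities, user~2's constraint yields the upper bound $t\le \frac{b-1}{2b}$ while user~1's own constraint yields the lower bound $t\ge \frac{a-1}{2a}$; since $\tfrac12\bigl(1-\tfrac1w\bigr)$ is increasing in $w$, the feasible interval is nonempty and the efficiency-maximizing (hence unique) choice sits at the lower endpoint $t^\star=\frac{a-1}{2a}$, where user~1's envy-freeness constraint binds. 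The mechanism is therefore forced to output $t^\star$, giving user~1 true utility $U_1=1+a\,t^\star=\frac{a+1}{2}$.

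The second key step is the manipulation. I would let user~1 over-report its fast-GPU speedup as $c$ with $a<c<b$ (each reported entry no smaller than the truth, as the $\mathsf{SP}$ definition requires). User~1 remains the slow user, so \cref{theo:lemma1} and the single-parameter reduction still apply, and repeating the computation with $a$ replaced by $c$ forces the new binding value $t'=\frac{c-1}{2c}$, which is strictly larger than $t^\star$ because $\frac{w-1}{2w}$ increases in $w$. Evaluating user~1's \emph{true} utility on this allocation gives $U_1'=1+a\,t'$, and a short computation yields $U_1'-U_1=\tfrac12\bigl(1-\tfrac{a}{c}\bigr)>0$ whenever $c>a$. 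Thus over-reporting strictly raises user~1's true throughput, contradicting strategy-proofness, so no optimal-efficiency envy-free mechanism can also be strategy-proof.

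The step I expect to be the main obstacle is establishing that the mechanism's output is genuinely forced at both profiles: I must argue that user~1's own envy-freeness inequality (not user~2's) is the active constraint at the efficiency optimum, that it remains active after the misreport, and that the over-report stays within the regime $c<b$ in which user~1 is still the slow user so that \cref{theo:lemma1} continues to fix the slow-GPU assignment. Once the binding constraint is identified and its monotone dependence on the reported speedup is established, the comparison $U_1'-U_1=\tfrac12\bigl(1-\tfrac{a}{c}\bigr)$ drops out immediately; the entire conceptual content lies in showing that inflating the report relaxes the slow user's own envy constraint and lets it seize enough extra fast-GPU share to more than offset being valued at its lower true speedup.
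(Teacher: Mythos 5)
Your proposal is correct and follows essentially the same route as the paper: both arguments show that at the efficiency optimum the slowest user's envy-freeness constraint must be tight, and that a slight over-report (kept below the next user's speedup) relaxes that binding constraint and strictly raises the slow user's true throughput—indeed your $2\times 2$ instance is exactly the paper's own illustrative example ($a=2$, $b=4$, $c=3$ giving $t^\star=0.25$ and $t'=1/3$). One small caveat: the reduction to $\vec{x_1}=\langle 1,t\rangle$ does not follow from \cref{theo:lemma1} alone, since both users have normalized speedup $1$ on the slow GPU and efficiency is therefore indifferent to how that GPU is split; it does follow once you combine user~1's envy-freeness inequality $s+at\ge\frac{1+a}{2}$ with $s\le 1$ and the minimization of $t$, which jointly force $s=1$ and $t=t^\star$.
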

\begin{proof}
To achieve optimal efficiency, we should have stringent \textit{envy-free} constraints on $\vec{x_1}$ and another user $\vec{x_l}$ such that:
\begin{equation}
\label{envy-conflict-strategy}
\vec{W_1}\cdot \vec{x_1} - \vec{W_1}\cdot \vec{x_l} = 0,\ l\neq1, 
\end{equation}
otherwise, allocating a larger GPU share to users with higher speedup values could enhance the overall resource efficiency. Let us examine a scenario where user 1 manipulates and marginally increases the speedup vector, while ensuring it remains below the speedup of the second slowest user, i.e., 
\begin{equation*}
\vec{W_2}\succcurlyeq \vec{W'_1} = \langle 1,w^2_1+\epsilon^2_1,\cdots, w^k_1+\epsilon^k_1 \rangle\succcurlyeq \vec{W_1}, \ \mbox{and} \ \epsilon^j_1 > 0, \forall j.
\label{eq:fake_speedup}
\end{equation*}
If the rightmost non-zero element $x^j_1$ does not reach the limit $m_j$, it becomes apparent that the inequality $x^{j'}_1> x^j_1$ holds by utilizing Lemma~\ref{theo:lemma1}. Consequently, we have $\vec{W_1}\cdot\vec{x'_1} > \vec{W_1}\cdot\vec{x_1}$. In another scenario, it is evident that the leftmost zero element $x^{(j+1)'}_1$ improves. Both of these results contradict \textit{strategy-proofness}. This completes the proof.
\end{proof}

\begin{figure}[!tb]
\begin{minipage}[!htb]{0.44\linewidth}
\centering
\includegraphics[width=0.99\linewidth]{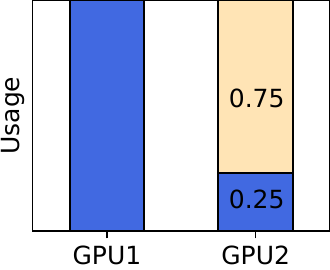}
\subcaption{Before lying}
\end{minipage}
\begin{minipage}[!htb]{0.44\linewidth}
\centering
\includegraphics[width=0.99\linewidth]{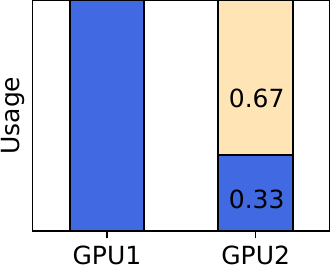}
\subcaption{After lying}
\end{minipage}
\caption{An example shows the conflicts between various fairness properties. The blue and yellow space represent the allocation for the first and second user, respectively.}
\label{fig:ef_example}
\end{figure}

\begin{theorem}
\label{theo:si_violation}
No sharing mechanism providing optimal resource efficiency can guarantee \textit{sharing-incentive} and \textit{strategy-proofness} at the same time.
\end{theorem}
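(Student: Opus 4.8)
The plan is to follow the template of Theorem~\ref{theo:ef_violation}, but with the tight envy-free constraint replaced by a tight sharing-incentive constraint on the slowest user. First I would argue that under any mechanism attaining optimal efficiency while respecting $\mathsf{SI}$, the sharing-incentive constraint of user~$1$ (the slowest user) must be binding, i.e.
\begin{equation*}
\vec{W_1}\cdot \vec{x_1} = \tfrac{1}{n}\sum_{j=1}^{k} w_1^j m_j .
\end{equation*}
The justification is an exchange argument: since $w_1^j=\min_l w_l^j$ for every $j$, any slack here means user~$1$ holds resources (located, by Lemma~\ref{theo:lemma1}, at its right-most non-zero entry) that could be handed to a strictly faster user. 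Such a transfer strictly increases $\sum_{l}\vec{W_l}\cdot\vec{x_l}$, keeps user~$1$ above its $\mathsf{SI}$ floor, and only relaxes every other $\mathsf{SI}$ constraint, contradicting optimality. Together with Lemma~\ref{theo:lemma1} this pins down both the value and the fill-from-left shape of $\vec{x_1}$, whose right-most non-zero entry I denote the frontier at index $r$.

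Next I would let user~$1$ inflate its report to $\vec{W'_1}=\langle 1, w_1^2+\epsilon_1^2,\dots,w_1^k+\epsilon_1^k\rangle$ with $\vec{W_2}\succcurlyeq \vec{W'_1}\succcurlyeq\vec{W_1}$, choosing the perturbation to be small and supported strictly beyond the frontier, i.e.\ $\epsilon_1^j>0$ only for the GPU types $j>r$ that user~$1$ currently does not hold. User~$1$ stays the slowest, so the binding-constraint result re-applies to the reported instance and forces
\begin{equation*}
\vec{W'_1}\cdot\vec{x'_1}=\tfrac{1}{n}\sum_{j=1}^{k}(w_1^j+\epsilon_1^j)m_j > \tfrac{1}{n}\sum_{j=1}^{k} w_1^j m_j .
\end{equation*}
Because the reported budget strictly rises, Lemma~\ref{theo:lemma1} forces the frontier of $\vec{x'_1}$ to advance, either by enlarging the right-most partial entry or by opening the next GPU type.

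The final step is to evaluate the advanced allocation in the \emph{true} speedups. The extra reported budget is spent filling the frontier: if the right-most type is unsaturated it absorbs the increment at a rate whose reported and true values coincide (the perturbation is supported strictly beyond it), so the true throughput rises by exactly the budget increment; if that type is already saturated, the increment opens the next type and is absorbed at a true speedup that is a strictly positive fraction of its reported speedup, so the true throughput still rises. Either way $\vec{W_1}\cdot\vec{x'_1}>\vec{W_1}\cdot\vec{x_1}$, so inflating the report strictly improves user~$1$'s true throughput, contradicting $\mathsf{SP}$ and completing the argument.

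I expect the main obstacle to be the valuation mismatch between the reported and true speedups: the $\mathsf{SI}$ floor is enforced in reported speedups, yet the gain must be certified in true speedups. If the perturbation were instead placed on GPU types user~$1$ already holds to capacity, inflating the report could \emph{shrink} user~$1$'s true throughput, which is consistent with $\mathsf{SP}$ rather than contradicting it; hence the perturbation must be steered toward the frontier, where lifting the floor draws in resources that are (at least partly) valued at the true rate, and $\epsilon_1^j$ must be kept small enough that the frontier advances within a single GPU type. A secondary point needing care is the exchange argument for binding $\mathsf{SI}$: I must confirm a strictly faster user is available to absorb the freed quota and that no other user's $\mathsf{SI}$ constraint is disturbed, which holds because individual allocations carry no upper capacity bound and transfers to faster users only relax their floors.
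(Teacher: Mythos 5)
Your proof takes essentially the same route as the paper's: pin down a tight \textit{sharing-incentive} constraint for the slowest user, inflate the reported speedup on coordinates where user~1's holding is below the $\frac{1}{n}$ share so that the reported floor rises strictly faster than the reported value of the old allocation, and invoke the fill-from-left structure of Lemma~\ref{theo:lemma1} to conclude that $\vec{x'_1}\succcurlyeq\vec{x_1}$ and hence the true throughput strictly increases. The one substantive difference is the choice of perturbed coordinate, and there your version leaves a case uncovered: if user~1's frontier is already the last type, i.e., $x_1^j=m_j$ for all $j<k$ and $x_1^k>0$ (which can occur when $w_1^k m_k$ is large relative to $\sum_{j<k}w_1^j m_j$), there is no type $j>r$ on which to place the perturbation. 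The paper sidesteps this by always perturbing only coordinate $k$ and observing that the tight \textit{sharing-incentive} constraint forces $x_1^k<\frac{m_k}{n}$ (otherwise fill-from-left would make the constraint slack), so the floor rises by $\epsilon^k_1\frac{m_k}{n}$ while the old allocation's reported value rises by only $\epsilon^k_1 x_1^k$, and the same growth argument goes through; adopting that move for your boundary case closes the gap. Everything else --- in particular your explicit handling of the reported-versus-true valuation mismatch and your exchange argument for why the constraint binds, which the paper merely asserts --- is sound and, if anything, more carefully justified than the original.
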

\begin{proof}
In this scenario, user 1 is also subjected to a stringent \textit{sharing-incentive} constraint on the allocation vector $\vec{x_1}$:
\begin{equation}
\label{si-conflict-strategy}
\vec{W_1}\cdot \vec{x_1} - \vec{W_1}\cdot\vec{\frac{m}{n}} = 0,\ \vec{m}=\langle m_1,\cdots,m_k\rangle,
\end{equation}
We construct a fake speedup vector, $\vec{W'_1} = \langle 1, w^2_1, \cdots, w^k_1 + \epsilon^k_1 \rangle$, where only the last value is augmented. Under the real speedup, we assert that $x^k_1 <\frac{m_k}{n}$, as having $x^k_1\geq\frac{m_k}{n}$ would lead to $x^j_1=m_j, \forall j\neq k $. This condition would imply $\vec{W_1}\cdot \vec{x_1} > \vec{W_1}\cdot\vec{\frac{m}{n}}$, which contradicts Equation~\eqref{si-conflict-strategy}. Therefore, we can conclude that  $\vec{W_1}\cdot \vec{x_1} \leq \vec{W'_1}\cdot \vec{x_1} < \vec{W'_1}\cdot \vec{\frac{m}{n}}$. This implies that $\vec{x'_1}\succcurlyeq \vec{x_1}$ by utilizing Lemma~\ref{theo:lemma1}, violating \textit{strategy-proofness}. This completes the proof.
\end{proof}

Moreover, we provide an intuitive example depicted in Fig.~\ref{fig:ef_example} to better explain these two theorems.  For the first theorem, consider the scenario where two users with a speedup matrix of $\langle 1,2;1,4\rangle$. In this case, the allocation that provides the optimal efficiency and \textit{envy-freeness} is $\langle 1,0.25;0,0.75\rangle$\footnote{As the second user has the highest speedup rate, giving more high-end GPU devices to her benefits the overall efficiency most. However, to maintain \textit{envy freeness}, the user with speedup $\langle1,2\rangle$ should not envy the share of another user barely.}. When the first user lies and reports the speedup to be $\langle 1,3\rangle$, the original allocation is no longer \textit{envy-free}\footnote{Since $\langle 1,3\rangle\cdot\langle 1,0.25\rangle < \langle 1,3\rangle\cdot\langle 0,0.75\rangle$, the user with false speedup will envy the share of another one.}, and the new allocation with the same goal becomes $\langle 1,0.33;0,0.67\rangle$, thus violating \textit{strategy-proofness}.

\subsection{Architecture \& Workflow}
In light of the conflicts that arise from different fairness properties, our primary objective is to develop allocation schemes for diverse scheduling scenarios, with the aim of improving resource efficiency while maintaining fairness properties to the greatest extent possible. Specifically, OEF prioritizes \textit{strategy-proofness} while maximizing resource efficiency in non-cooperative environments where multiple tenants may cheat to compete for limited resources. In cooperative environments where cheating is a non-issue, it ensures \textit{envy-freeness} and \textit{sharing-incentive} while maximizing the resource efficiency. Distinct scheduling mechanisms are adopted based on the selection of the cluster administrator.  Moreover, the \textit{pareto-efficiency} are always maintained in both environments. The complete system architecture of OEF, along with its scheduling workflow, is depicted in Fig.~\ref{fig:sysarch}.

Once a tenant enters the cluster, OEF takes charge of the submitted DL jobs. First, OEF provides tenants with an profiling agent, to which users submit one of their DL training tasks for throughput estimation. Subsequently, the agent delivers the profiled throguhput to the scheduler module, which is responsible for generating optimal resource allocation schemes. At its core, a fair share evaluator (\ding{182} in Fig.~\ref{fig:sysarch}) first computes the fair share among users based on the algorithm specifically tailored for a selected scheduling environment.  Importantly, the preservation of \textit{strategy-proofness} in the non-cooperative environment is guaranteed through the enforcement of OEF's allocation mechanism. The placer component (\ding{183} in Fig.\ref{fig:sysarch}) then allocates GPUs to users, enforcing efficient allocation in the long run. Specifically, we design a rounding based placement policy to approximate the fractional ideal GPU shares among tenants in multiple rounds, and address the straggler effect incurred by cross-type placement of DL training jobs.

\begin{figure}
\centering
\includegraphics[width=0.95\linewidth]{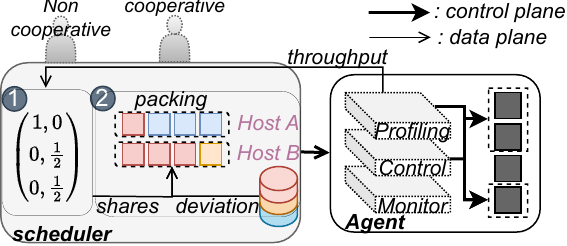}
\caption{The system architecture of OEF.}
\label{fig:sysarch}
\end{figure}

\section{OEF Design}
\subsection{Profiling}
In OEF, tenants have the flexibility to submit multiple DL training jobs of the same type or diverse types. To facilitate this, the profiling agent offers tenants a profiling interface, allowing them to collect throughput information for their DL training jobs on each type of GPU. Specifically, tenants are responsible for selecting a representative profiling task for each job type they have. The profiling mechanism is lightweight, incurring minimal profiling overhead by executing a short run (few mini-batches in our setting). This makes sense since the execution time of each iteration almost keeps constant over time. Upon the completion of the profiling process, the final speedup vector is generated and provided to OEF scheduler, which will then makes allocation decisions to each tenant. 



\subsection{Allocation Mechanisms}
\subsubsection{OEF in non-cooperative environments}
\label{noncooperative-mechanism}
In contrast to the requirements of \textit{envy-freeness} or \textit{sharing-incentive}, expressing \textit{strategy-proofness} directly through a formulation is challenging. This difficulty arises from the fact that if a user misrepresents the speedup vector, the speedup matrix $W$ and allocation matrix $X$ change accordingly, making it difficult to compare the efficiency between before and after cheating among users. Fortunately, we have discovered that ensuring all users achieve identical (normalized) throughput can guarantee \textit{strategy-proofness}. We formally design the allocation scheme based on the optimal solution to the following optimization problem:
\begin{subequations}
\begin{align}
\max & \sum_{l=1}^n \sum_{j=1}^k w_l^j \cdot x_l^j \label{eq:objectiveSP}\\
   \mbox{s.t.}, & \sum_{l=1}^n x_l^j \leq m_j,\quad \forall j,\label{eq:PEconstraintSP}\\
   & \sum_{j=1}^k w_l^j \cdot x_l^j = \sum_{j=1}^k w_i^j \cdot x_i^j, \ \forall 1 \leq i, l \leq n. \label{eq:eqinnonco}
\end{align}
\label{eq:nonco}
\end{subequations}

The key idea behind this optimization is that attempting to cheat will lead to a new allocation that still maintains the same overall speedup across all users. If the efficiency of honest users decreases under the new allocation, it follows that the efficiency of the cunning user under their actual speedup should also decrease due to the equality constraint~\eqref{eq:eqinnonco}. Conversely, if honest users receive more resources because of improved efficiency, it indicates that cunning users must be allocated fewer resources than they had received before.

\subsubsection{OEF in cooperative environments}
As \textit{strategy-proofness} is not a major concern in cooperative environments, we get rid of the conflicts identified in \cref{conflict_fairness} and are able to incorporate \textit{envy-freeness} and \textit{sharing-incentive}  into a unified framework. To be more precise, we formulate the cooperative OEF framework as follows:
\begin{subequations}
\begin{align}
\max & \sum_{l=1}^n \sum_{j=1}^k w_l^j \cdot x_l^j \label{eq:objectivePESIEF}\\
   \mbox{s.t.}, & \sum_{l=1}^n x_l^j \leq m_j,\quad \forall j, \label{eq:PEconstraintPESIEF}\\
   & \sum_{j=1}^k w_l^j \cdot x_l^j \geq \sum_{j=1}^k w_l^j \cdot x_i^j, \ \forall 1 \leq i, l \leq n. \label{eq:EFinPESIEF}
\end{align}
\label{eq:PESIEF}
\end{subequations}

In this framework, we address the need for \textit{envy-freeness} by directly incorporating it as optimization constraints. However, we do not include explicit constraints for expressing \textit{sharing-incentive}. This is because, as we will demonstrate in Theorem~\ref{share-incentive-oef}, the pursuit of \textit{envy-freeness} inherently leads to \textit{sharing-incentive} when we maximize the overall resource efficiency.

\subsubsection{Weighted OEF}
Currently, administrators frequently assign varying levels of priority to users. Surprisingly, it is possible to extend OEF to accommodate this scenario by allocating additional resources to users engaged in more critical tasks or those with higher resource quotas~\cite{Gandivafair}. 


In the weighted OEF framework, each tenant $j$ is assigned a weight $\pi_j$ that represents their level of importance. Specifically, if $\pi_2=2$ and $\pi_1=1$, it implies that the second user $u_2$ should receive an allocation resulting in twice the total training throughput compared to the first user $u_1$. OEF introduces a novel approach to achieve this while ensuring fairness. The key concept involves replicating the speedup vector multiple times for users with higher weights. For instance, consider the speedup matrix in Eq.~\eqref{speedup-matrix-example} with $\pi_2=2$, OEF will modify the speedup matrix and produce an allocation in a non-cooperative environment as follows:

\begin{equation*}
\label{efficieny_allo}
W =
\begin{blockarray}{ccc}
\mbox{GPU}_1 & \mbox{GPU}_2 & \\
\begin{block}{(cc)c}
1 & 2 & u_1 \\
1 & 5 & u_2 \\
1 & 5 & u_2 \\
\end{block}
&&
\end{blockarray}, \ \
X =
\begin{blockarray}{ccc}
\mbox{GPU}_1 & \mbox{GPU}_2 & \\
\begin{block}{(cc)c}
1 & 1/3 & u_1 \\
0 & 1/3 & u_2 \\
0 & 1/3 & u_2 \\
\end{block}
&&
\end{blockarray}.
\end{equation*}
In this scenario, if there is only one worker available for two types of GPUs, $u_2$ will be allocated ${2}/{3}$ fraction of the second type of GPU. This replication approach employed by Weighted OEF ensures that all the fairness properties achieved by OEF are perfectly maintained. More importantly, it enables a user to run multiple DL jobs with varying speedups, as explained in \cref{comptibility}. However, conventional solutions that directly incorporate user weights into optimization problems are unable to handle such cases.

\subsubsection{Compatible with different job types}
\label{comptibility}
In certain situations, a user may have the need to train multiple different types of DL jobs concurrently. However, representing these distinct jobs within a single speedup vector can be challenging. Fortunately, the framework of Weighted OEF allows us to effectively address this scenario.

When a user submits different types of DL jobs to a cluster, OEF treats each job type as a separate virtual user. This allows us to handle the different job types individually. However, it is worth noting that simply creating virtual users may result in an unfair increase in a user's overall throughput. To address this, we divide the weight of a user equally among their job types and assign a corresponding weight to each virtual user.

Once again, we will use the speedup matrix in Eq.~\eqref{speedup-matrix-example}, as an example to demonstrate the allocation process. Consider that both users have equal weights and the first user wishes to train a new DL job with a speedup vector of 
$\langle 1,3\rangle$ in addition to the original job, the resulting new speedup matrix and the allocation in a non-cooperative environment are as follows:

\begin{equation*}
\label{efficieny_allo}
W =
\begin{blockarray}{ccc}
\mbox{GPU}_1 & \mbox{GPU}_2 & \\
\begin{block}{(cc)c}
1 & 2 & u_1 \\
1 & 3 & u_1 \\
1 & 5 & u_2 \\
1 & 5 & u_2 \\
\end{block}
&&
\end{blockarray}, \ \
X =
\begin{blockarray}{ccc}
\mbox{GPU}_1 & \mbox{GPU}_2 & \\
\begin{block}{(cc)c}
1 & 0.11 & u_1 \\
0 & 0.41 & u_1 \\
0 & 0.24 & u_2 \\
0 & 0.24 & u_2 \\
\end{block}
&&
\end{blockarray}.
\end{equation*}
By this design, OEF is also able to support different priorities of jobs within the same tenant.


\subsection{Placement Optimization}
\label{sec:alloc-mechanism}
Given the substantial interference caused by multiplexing~\cite{xiao2020antman}, it proves more advantageous for the placer in OEF to allocate the entire GPU to a single user in each scheduling round. However, there is a possibility that the fair share evaluator may yield fractional share values. To mitigate this, we have developed a rounding policy that converts the fractional value assigned to each user into an integral share, ensuring fairness and efficiency in the long run. During each round $t$, the placer tracks the cumulative deviation $\mathsf{dev}_j(t)$ for every user $j$ from their ideal share $\mathsf{ideal}_j(t)$. Utilizing this deviation, OEF assigns each user $j$ a share, represented as $\mathsf{real}_j(t) = \mathsf{round}\big(\mathsf{ideal}_j(t) + \mathsf{dev}_j(t)\big)$. The deviation is updated based on the formula $\mathsf{dev}_j(t+1) = \mathsf{dev}_j(t) + \big(\mathsf{ideal}_j(t) - \mathsf{real}_j(t)\big)$, allowing the actual share for each job to to gradually approximate the assigned value over time.

In situations where a user is allocated only a small GPU share, insufficient to meet the requirements of any job, we have refined the rounding strategy. Specifically, the placer adjusts $\mathsf{real}_j(t)$ to $0$ if it falls below $\min_k \mathsf{demand}_k$, where $\mathsf{demand}_k$ represents the required worker size of job $k$ under user $j$. By progressively accumulating the deviation value over time, tenants who experience resource starvation are assured a $\mathsf{real}_j(t)$ greater than $\min_k \mathsf{demand}_k$. Consequently, these tenants will inevitably have the opportunity to execute at least one job.

To further alleviate network contention within the cluster, OEF optimizes the placement scheme for DL workloads. Notably, the placer grants placement priority to jobs with more workers, as the collective communication overhead— a primary contributor to network congestion—escalates with the number of workers. Consequently, under OEF, these jobs are prioritized for placement on a host whenever feasible.

\subsection{Alleviating the Straggler Effect}
\label{distributed-oef}
It is possible for users to possess multiple types of GPUs, requiring the distribution of their jobs across different GPU types under OEF. In such cases, the occurrence of the \textit{straggler effect} is a concern, where the slowest GPU type limits the overall speed of DL training, outweighing the advantages provided by faster GPUs. However, OEF's optimization ensures that the allocated GPUs for each user are limited to a small range, as proven in Theorem \ref{adjacent-oef}. This advantage greatly mitigates the impact of the straggler effect.

Furthermore, we provide a theoretical finding that, with OEF, only a few users and their respective jobs will experience the straggler effect. According to the Extreme Point Theorem~\cite{extreme-convex}, the allocation matrix produced by OEF contains a maximum of $(n+m-1)$ nonzero elements. Hence, when the number of GPU types is smaller than the number of tenants, most tenants will be allocated only one type of GPU. This observation highlights the effectiveness of OEF in minimizing the impact of the straggler effect.

\subsection{Implementation Details}
We have developed a prototype system for OEF using Python 3. The current implementation supports DL training jobs written in PyTorch.

\noindent\textbf{Fair share evaluation.} Given the linearity inherent in our algorithm formulation, we have implemented the fair share evaluator in OEF utilizing \textit{cvxpy}, an effective tool for solving convex optimization problems. In this context, resource allocation is defined as a \textit{variable}, and fairness constraints are articulated as \textit{constraints}. We employ the \textit{ECOS} solver to efficiently generate the allocation.

\noindent\textbf{Control and data plane.} 
Within OEF, inter-module communication is established through Restful interfaces. Additionally, the seamless transfer of checkpoint files between hosts is facilitated by rsync, ensuring an imperceptible training process for the user.

\section{Properties of OEF}
\label{property-oef}
In this section, we present the main properties achieved by the OEF framework. 
\begin{enumerate}[leftmargin=*]
\item
\noindent OEF satisfies \textit{pareto-efficiency} (\textbf{Theorem} \ref{optimal-efficiency}).
\item
\noindent In non-cooperative environments, OEF ensures \textit{strategy-proofness} (\textbf{Theorem} \ref{strategy-proof-oef}). To our knowledge, OEF is the first resource allocation framework that can achieve \textit{strategy-proofness} for heterogeneous GPU clusters.
\item    In cooperative environments, OEF achieves the optimal resource efficiency and ensures \textit{sharing-incentive} and \textit{envy-freeness} (\textbf{Theorem} \ref{share-incentive-oef}).
\item OEF only assigns GPUs of adjacent types to the same user (\textbf{Theorem} \ref{adjacent-oef}). OEF can benefit from this property to better support alleviating the potential straggler effect in distributed DL training (\cref{distributed-oef}).
\end{enumerate}

\begin{theorem}
\label{share-incentive-oef}
Cooperative OEF achieves the optimal resource efficiency, while ensuring \textit{sharing-incentive} and \textit{envy-freeness}.
\end{theorem}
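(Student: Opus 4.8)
The plan is to verify the three assertions of the theorem separately, treating \emph{envy-freeness} and optimal efficiency as essentially immediate from the structure of problem~\eqref{eq:PESIEF}, and reserving the real work for deriving \emph{sharing-incentive} as a consequence of envy-freeness together with a completeness (work-conserving) property of the optimizer. Envy-freeness holds because constraint~\eqref{eq:EFinPESIEF} is exactly the envy-free condition and is imposed at every feasible point. Optimal efficiency holds because the objective~\eqref{eq:objectivePESIEF} is precisely the overall training throughput, so the optimizer attains the largest throughput compatible with envy-freeness; combined with Pareto-efficiency (Theorem~\ref{optimal-efficiency}) this is the best efficiency attainable under the fairness constraint.

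The crux is a work-conserving lemma: at any optimum $X$ of~\eqref{eq:PESIEF} we have $\sum_{l=1}^n x_l^j = m_j$ for every type $j$. I would prove this by contradiction via a symmetric redistribution of idle capacity. Suppose some type $j_0$ carries slack $\delta = m_{j_0} - \sum_{l} x_l^{j_0} > 0$. Form $\tilde X$ by adding $\delta/n$ units of type $j_0$ to every user, i.e.\ $\tilde x_l^{j_0} = x_l^{j_0} + \delta/n$. The resource constraint~\eqref{eq:PEconstraintPESIEF} for type $j_0$ now holds with equality and all other resource constraints are untouched. The key observation is that each envy-free inequality~\eqref{eq:EFinPESIEF} is preserved: for users $l,i$ the added term $w_l^{j_0}\cdot\delta/n$ appears identically on both sides of $\sum_j w_l^j \tilde x_l^j \ge \sum_j w_l^j \tilde x_i^j$ and therefore cancels, reducing it to the original inequality. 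Hence $\tilde X$ is feasible, yet its objective exceeds that of $X$ by $\sum_l w_l^{j_0}\cdot\delta/n > 0$, contradicting optimality. Thus no type can carry slack.

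With completeness in hand, sharing-incentive follows by the classical averaging argument. Fixing a user $l$ and summing the envy-free inequality $\sum_j w_l^j x_l^j \ge \sum_j w_l^j x_i^j$ over all $i = 1,\dots,n$, the right-hand side collapses to $\sum_j w_l^j \sum_{i} x_i^j = \sum_j w_l^j m_j$ by the work-conserving lemma, giving $n\sum_j w_l^j x_l^j \ge \sum_j w_l^j m_j$, i.e.\ $\sum_j w_l^j x_l^j \ge \frac1n\sum_j w_l^j m_j = \vec{W_l}\cdot\frac{\vec m}{n}$. This is exactly the sharing-incentive guarantee of \cref{fairnees-definition}, and the equal split $\vec{x_l} = \vec{m}/n$ (trivially feasible and envy-free) confirms the bound is precisely the $\frac1n$-partition threshold.

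I expect the main obstacle to be establishing the work-conserving lemma rather than the averaging step. A naive attempt to consume the slack by handing it to a single user fails, since enlarging one user's bundle can induce envy from others and thereby violate~\eqref{eq:EFinPESIEF}. The decisive idea is that distributing the idle capacity equally leaves every pairwise envy comparison unchanged, so feasibility is retained while throughput strictly increases; once this is secured, both the efficiency and the sharing-incentive claims follow with little further effort.
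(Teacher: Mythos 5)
Your proposal is correct and follows essentially the same route as the paper's proof: envy-freeness and optimal efficiency are read off directly from the constraint~\eqref{eq:EFinPESIEF} and the objective~\eqref{eq:objectivePESIEF}, and \emph{sharing-incentive} is obtained by summing the envy-freeness inequalities over all users and invoking $\sum_i x_i^j = m_j$. The one place you go beyond the paper is in actually proving that work-conservation property (via the symmetric redistribution of slack, which preserves every pairwise envy comparison while strictly increasing the objective) --- the paper simply asserts $\sum_i x_i^j = m_j$ even though the constraint is only $\le m_j$, so your lemma closes a small gap in the published argument.
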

\begin{proof}
As discussed above, envy-freeness denotes that the GPU share assigned to each user is already the best-fit one in contrast to that of others, i.e., $\sum_{j=1}^k w_l^j \cdot x_l^j \geq \sum_{j=1}^k w_l^j \cdot x_i^j, \ \forall 1 \leq i, l \leq n$. Summing up all constraints for a user $l$, including a constraint that user $l$ should not envy the share he has already obtained, we have:
\begin{equation*}
n\cdot \sum_{j=1}^k w_l^j \cdot x_l^j \geq \sum_{i=1}^n \sum_{j=1}^k w_i^j \cdot x_i^j.
\end{equation*}

Note that we also have $\sum_{i=1}^n x^j_i = m_j$ for each type $j$ of GPU resources. In this sense, by moving $n$ to the left side, we can obtain:
\begin{equation*}
\sum_{j=1}^k w_l^j \cdot x_l^j \geq \sum_{j=1}^k \frac{m_j}{n}\cdot w_l^j,
\end{equation*}
which directly implies \textit{sharing-incentive}. As in cooperative environments OEF explicitly formulates \textit{envy-freeness} constraint, these two properties are automatically achieved. Furthermore, the OEF framework guarantees the best resource efficiency, since linear optimization yields a globally optimal solution. This completes the proof.
\end{proof}

\begin{theorem}
\label{adjacent-oef}
Under OEF, users are allocated adjacent GPU types.
\end{theorem}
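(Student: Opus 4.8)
The plan is to prove a \emph{staircase} structure for the optimal allocation and read off adjacency from it. After ordering the GPU types by speedup (type $1$ slowest, type $k$ fastest) and ordering the users so that $\vec{W_1}\preccurlyeq\cdots\preccurlyeq\vec{W_n}$, I would show that each user's \emph{support} — the set of types $j$ with $x_l^j>0$ — is a contiguous block of indices and that these blocks are ordered consistently with the user ordering. Adjacency of the allocated GPU types is then immediate, and Lemma~\ref{theo:lemma1} already supplies the base case: the slowest user occupies the leftmost block, saturating each type to $m_j$ except possibly the rightmost. The engine for the induction is a rearrangement (exchange) argument ruling out \emph{inversions}, and the whole scheme can be anchored on the monotonicity $w_l^j$ is nondecreasing in both the user index $l$ and the type index $j$, with $w_l^1=1$, which follows from the definition of speedup in \cref{fairnees-definition}.

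Concretely, say an inversion occurs when there are users $i<l$ and types $p<q$ with $x_i^q>0$ and $x_l^p>0$, i.e. a slower user holds a faster GPU while a faster user holds a slower one. I would perturb the allocation by moving an amount $b$ of type $q$ from user $i$ to user $l$ and an amount $a$ of type $p$ from user $l$ to user $i$; since for each type the same amount leaves one user and enters the other, the supply constraints $\sum_l x_l^j\le m_j$ are preserved exactly. The change in total efficiency is $\Delta=a\,(w_i^p-w_l^p)+b\,(w_l^q-w_i^q)$, and monotonicity makes the first term nonpositive and the second nonnegative, so $\Delta\ge 0$ for $b$ large relative to $a$; a strict gain would contradict optimality. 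Hence an optimal support is ``sorted,'' and a user whose support skipped an intermediate type $r$ used by another user would immediately create an inversion with that user, forcing every support to be an interval — the adjacency claim.

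The step I expect to be the main obstacle is reconciling the efficiency-improving exchange with the fairness constraints, because a two-user swap generally cannot hold both throughputs fixed: preserving user $i$ needs $a/b=w_i^q/w_i^p$ while preserving user $l$ needs $a/b=w_l^q/w_l^p$, and these ratios differ. For the cooperative formulation~\eqref{eq:PESIEF} the envy-freeness conditions~\eqref{eq:EFinPESIEF} are inequalities, so a sufficiently small exchange at a non-tight face keeps feasibility and the argument goes through directly. The delicate case is the non-cooperative formulation~\eqref{eq:nonco}, whose equality constraints~\eqref{eq:eqinnonco} pin every user to the common throughput, so a swap that perturbs even two users breaks equality with the rest. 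I would resolve this by proving \emph{existence} of a staircase optimum rather than sortedness of every optimum: starting from an arbitrary optimum, I would remove inversions with multi-user cyclic exchanges engineered to preserve the entire throughput vector (hence both efficiency and the equality constraints), driving down a disorder potential such as the total inversion mass $\sum_{i<l}\sum_{p<q} x_l^p\,x_i^q$, and invoking the extreme-point structure (at most $n+k-1$ nonzeros, \cref{distributed-oef}) to guarantee termination at a sorted, and therefore adjacent, allocation.
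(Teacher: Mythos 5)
You have correctly identified the crux — the non-cooperative equality constraints~\eqref{eq:eqinnonco} block any naive two-user swap — but your proposal stops exactly where the real work begins, and the workaround you sketch does not close the gap. Your symmetric exchange (move $b$ units of type $q$ one way and $a$ units of type $p$ the other, so that per-type supply is conserved) cannot preserve both users' throughputs, as you note, and a $\Delta\ge 0$ that is not strict does not contradict optimality anyway. The paper's proof escapes this with an \emph{asymmetric} exchange: writing the two users' speedup rows as $(a_1,\dots,a_1b_1)$ and $(a_2,\dots,a_2b_2)$ with $b_2>b_1$, it transfers the entire slow-GPU amount $x_2$ from the faster user to the slower one and then compensates \emph{each user separately} on the fast GPU so that both throughputs are exactly unchanged ($\epsilon_1=x_2/b_1$ removed from one, $\epsilon_2=x_2/b_2$ added to the other). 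Because $b_2>b_1$, this frees $\epsilon_1-\epsilon_2>0$ of the fast GPU while keeping the slow-GPU supply constant; the freed supply can then be handed back to all users in equal throughput increments, preserving~\eqref{eq:eqinnonco} and strictly increasing the objective — contradicting optimality. The point is that the exchange need not conserve supply per type; it only needs to conserve each user's throughput, and the mismatch in compensation rates is precisely what generates the strict improvement.

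Your proposed substitute — multi-user cyclic exchanges preserving the whole throughput vector, a disorder potential $\sum_{i<l}\sum_{p<q}x_l^p x_i^q$, and termination via the extreme-point bound — is only asserted, not constructed, and even if carried out it would prove only the \emph{existence} of an inversion-free optimum, which is weaker than the theorem's claim about the allocation OEF actually produces (the paper's argument shows every optimum is inversion-free, since an inversion yields a strictly better feasible point). Two further remarks: first, you still need the step from ``no inversion between two users'' to ``each user's support is an interval''; the paper handles this by observing that a zero sitting between two non-zeros in one user's row forces the column's supply to be consumed by some other user above or below, which recreates the forbidden corner pattern. Second, your cooperative-case argument (``a sufficiently small exchange at a non-tight face'') is fragile because at a vertex of the feasible polytope the envy-freeness constraints~\eqref{eq:EFinPESIEF} may well be tight; the paper's own treatment of this case is also brief, but it relies on the constraints being one-sided inequalities so that a trade which weakly increases every user's own throughput stays feasible — that is the property you should invoke, not non-tightness.
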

\begin{proof}
In this part, we prove that OEF can allocate adjacent GPU types only in cooperative and non-cooperative environments respectively. 

First, in the non-cooperative setting, we first demonstrate that OEF does not assign a non-zero value to the lower left corner of another non-zero value. To achieve this, we construct a counter-example as follows:
\begin{equation*}
X =
\begin{pmatrix} x_1 & \cdots & y_1 \\ \vdots & \ddots & \vdots \\ x_2 & \cdots & y_2 \end{pmatrix},\quad
W =
\begin{pmatrix} a_1 & \cdots & a_1\cdot b_1 \\ \vdots & \ddots & \vdots \\ a_2 & \cdots & a_2\cdot b_2 \end{pmatrix},
\end{equation*}
where $y_1$ and $x_2$ are non-zero, $a_2 > a_1$, and $b_2 > b_1$. Note that two users should have the same efficiency according to the formulation of non-cooperative OEF. However, if we modify the allocation matrix $X$ to be as follows:
\begin{equation*}
\begin{pmatrix} x_1 + x_2 & \cdots & y_1 - \epsilon_1 \\ \vdots & \ddots & \vdots \\ 0 & \cdots & y_2 + \epsilon_2 \end{pmatrix}, \quad or \quad
\begin{pmatrix} x_1 + \epsilon_3 & \cdots & 0 \\ \vdots & \ddots & \vdots \\ x_2 - \epsilon_4 & \cdots & y_1 + y_2 \end{pmatrix},
\end{equation*}
we can keep the efficiency of each user unchanged while leaving spare resources in $x$ or $y$ column, which can be utilized to improve the efficiency of all users to the same extent. To be specific, in the first case, we have $a_1 x_1 + a_1 b_1 y_1 = a_1 (x_1 + x_2) + (y_1 - \epsilon_1) a_1 b_1$ and $a_2 x_2 + a_2 b_2 y_2 = a_2 b_2 (y_2 + \epsilon_2)$. Spare resources exist because  $\epsilon_1 -  \epsilon_2 = \frac{x_2}{b_1} - \frac{x_2}{b_2} > 0$. In this sense, maximum resource efficiency is not achieved under allocation $X$, violating the optimality of OEF. This indicates that it is impossible for OEF to produce allocations like $X$.
As such, an allocation that has a zero value between non-zero values with the following pattern:
\begin{equation*}
  X =
\begin{pmatrix}
   &        & x_1 &        &     \\
   &        & \vdots &        &     \\
x^1_2 & \cdots & 0   & \cdots & x^2_2 \\
   &        & \vdots &        &     \\
   &        & x_3 &        &     \\
\end{pmatrix},
\end{equation*}
will not exist, as either $x_1$ or $x_3$ is non-zero, it will locate on the lower left corner of another non-zero entry. Therefore, users are allocated adjacent GPU types only.

We proceed to prove the theorem under cooperative OEF. Note that in cooperative OEF, no inter-user efficiency equalization is enforced and each user is only required to have an efficiency greater than a certain value (for \textit{envy-freeness} constraints, this value is $\min_l w_i\cdot x_l, \forall l$ given a user $i$). In this sense, we can directly trade between the first and the second column, or trade between the second and the third column within $X$ to generate higher efficiency without hurting anyone, thus contradicting the optimality of OEF. This completes the proof.
\end{proof}

\begin{theorem}
\label{optimal-efficiency}
OEF achieves pareto-efficiency.
\end{theorem}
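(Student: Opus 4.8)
The plan is to argue by contradiction, exploiting the fact that both OEF programs, \eqref{eq:nonco} and \eqref{eq:PESIEF}, share the same objective: the total throughput $\sum_{l}\sum_{j} w_l^j x_l^j = \sum_l E_l$, where I write $E_l = \sum_j w_l^j x_l^j$ for the throughput of user $l$. A Pareto improvement weakly raises every $E_l$ and strictly raises at least one, hence strictly raises $\sum_l E_l$. So if I can show that any resource-feasible Pareto improvement over the OEF allocation $X^*$ can be converted into a point that is \emph{still feasible} for the corresponding program yet has strictly larger objective, I contradict the global optimality of the linear program and conclude $\mathsf{PE}$. First I would fix notation, record that $X^*$ maximizes $\sum_l E_l$ over the feasible polytope of its program, and note $w_l^j>0$ for all $l,j$; the only real content is the feasibility-preservation step, since a feasible improvement is an immediate contradiction.

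For the non-cooperative program \eqref{eq:nonco}, the equality constraints \eqref{eq:eqinnonco} pin all users to a common throughput $E^\star$, so maximizing $\sum_l E_l$ is the same as maximizing this common (hence minimum) value; $X^*$ is the egalitarian solution. Suppose a resource-feasible $X'$ Pareto-dominates $X^*$, i.e.\ $E_l'\ge E^\star$ for all $l$ with strict inequality for some $l_0$. I would first trim every over-served user back to the current minimum $\bar E=\min_l E_l'\ge E^\star$ by returning GPU fractions; since all speedups are positive this releases a nonzero resource vector $\Delta\ge 0$, $\Delta\ne 0$. If $\bar E>E^\star$, the trimmed allocation is already a feasible equal-throughput point beating $X^*$. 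Otherwise $\bar E=E^\star$ with $\Delta\ne 0$; redistributing $\Delta/n$ to every user strictly increases each $E_l$, and a second trim to the new common minimum yields a feasible equal-throughput allocation with value strictly above $E^\star$. Either way the optimality of \eqref{eq:nonco} is contradicted.

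For the cooperative program \eqref{eq:PESIEF} the same skeleton applies, but the envy-freeness constraints \eqref{eq:EFinPESIEF} are not pure lower bounds on a user's own throughput: under a Pareto improvement the right-hand side $\sum_j w_l^j x_i^j$, i.e.\ user $l$'s valuation of the enlarged bundle of user $i$, may rise faster than $E_l$, so the dominating allocation need not stay envy-free. I expect this feasibility-preservation step to be the main obstacle. I would close it by realizing the improvement locally: invoking the work-conserving, adjacent-type structure of $X^*$ from Theorem~\ref{adjacent-oef}, I would decompose any Pareto gain into pairwise column trades between users holding overlapping GPU types, and show (via the same comparative-advantage/exchange computation already used in the proof of Theorem~\ref{adjacent-oef}) that each beneficial trade can be carried out while the envy constraints remain satisfied, strictly increasing $\sum_l E_l$ and contradicting optimality. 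As a backup route I would keep an LP-duality argument: the resource duals of \eqref{eq:PEconstraintPESIEF} with complementary slackness cap the marginal value of every reallocation, ruling out any trade that helps one user without charging another. Combined with the non-cooperative case, this yields \textit{pareto-efficiency} in both environments.
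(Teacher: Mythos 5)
Your overall skeleton is the same as the paper's: assume a Pareto improvement, observe that it strictly raises the common objective $\sum_l\sum_j w_l^j x_l^j$, and contradict the global optimality of the linear program. The paper's proof is exactly this one-liner --- it posits an $X'$ with higher total throughput ``while still satisfying the same constraints'' and stops there. What you do differently is refuse to take that feasibility clause for granted, and this is the right instinct: since \textit{pareto-efficiency} is defined over all resource-feasible allocations, a dominating $X'$ need not satisfy the equality constraints \eqref{eq:eqinnonco} or the envy constraints \eqref{eq:EFinPESIEF}, so it does not immediately live in the LP's feasible region and the contradiction does not come for free. For the non-cooperative program \eqref{eq:nonco} your trim-and-redistribute construction genuinely closes this hole: trimming to the minimum $\bar E$ and, if $\bar E=E^\star$, recycling the released slack $\Delta\neq 0$ uniformly produces an equal-throughput feasible point with common value strictly above $E^\star$, which is a contradiction the paper only gestures at. This part of your argument is complete and is, if anything, more careful than the published proof.

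The cooperative case, however, is still only a plan in your write-up, and it is precisely the case where feasibility preservation is hardest: under a Pareto improvement the right-hand sides of \eqref{eq:EFinPESIEF} move with the other users' enlarged bundles, so the dominating allocation can leave the envy-free polytope, and neither of your two proposed routes (pairwise trades via Theorem~\ref{adjacent-oef}, or complementary slackness on the duals of \eqref{eq:PEconstraintPESIEF}) is actually executed. You should either carry one of them through or note explicitly that the paper's own proof has the same unaddressed step; as written, your proposal establishes \textit{pareto-efficiency} rigorously only for non-cooperative OEF and leaves the cooperative claim at the same level of assertion as the original.
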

\begin{proof}
Assuming that the allocation produced by OEF is "pareto-inefficient", which means that it is possible to perform effective trading to improve overall efficiency without negatively impacting anyone. In this sense, there must exist another allocation, denoted as $X'$, that results in a greater overall efficiency than the OEF allocation $X$, while still satisfying the same constraints, i.e., $\sum_{j=1}^k \sum_{l=1}^n w^j_l\cdot x^{j'}_l > \sum_{j=1}^k \sum_{l=1}^n w^j_l\cdot x^j_l $. However, this contradicts the fact that $X$ is the globally optimal solution within the same feasible domain in linear optimization. This completes the proof.
\end{proof}

\begin{theorem}
\label{strategy-proof-oef}
Non-cooperative OEF ensures \textit{strategy proofness}.
\end{theorem}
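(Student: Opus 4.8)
The plan is to show that a tenant can never raise its own true (normalized) throughput by reporting an inflated speedup vector, which is exactly what \textit{strategy-proofness} demands. Fix the honest speedup matrix $W$, and let $X$ be the allocation returned by non-cooperative OEF, with common efficiency $E$, so that $\vec{W_l}\cdot\vec{x_l}=E$ for every $l$ by the equality constraint~\eqref{eq:eqinnonco}. The first observation I would make is a reduction: under~\eqref{eq:eqinnonco} the objective~\eqref{eq:objectiveSP} equals $\sum_{l}\vec{W_l}\cdot\vec{x_l}=nE$, so maximizing total throughput is the same as maximizing the common efficiency level $E$ over all allocations that respect the resource caps~\eqref{eq:PEconstraintSP} and equalize efficiency. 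In particular, $E$ is the \emph{largest} common efficiency attainable under $W$ (the feasible set is nonempty, as the zero allocation already equalizes efficiencies, and it is bounded by the resource caps).

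Next I would analyze a deviation. Suppose user $c$ misreports $\vec{W'_c}\succcurlyeq\vec{W_c}$ (equal in the normalized first coordinate, no smaller elsewhere), yielding matrix $W'$ and OEF output $X'$ with common reported efficiency $E'$. Every honest user reports truthfully, so its true efficiency under $X'$ is exactly $\vec{W_l}\cdot\vec{x'_l}=E'$ for $l\neq c$; and since $\vec{x'_c}\geq 0$ with $\vec{W'_c}\succcurlyeq\vec{W_c}$, the cheater's true throughput obeys $\tilde E_c:=\vec{W_c}\cdot\vec{x'_c}\leq\vec{W'_c}\cdot\vec{x'_c}=E'$. I would then argue by contradiction: assume $\tilde E_c>E$. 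Evaluated under the true matrix $W$, the allocation $X'$ gives every user a true efficiency of at least $\tilde E_c$ (the honest users attain $E'\geq\tilde E_c$ and the cheater attains $\tilde E_c$) while still satisfying~\eqref{eq:PEconstraintSP}, since the resource caps do not depend on $W$. Rescaling each user's vector by $\lambda_l=\tilde E_c/(\vec{W_l}\cdot\vec{x'_l})\leq 1$ produces a resource-feasible allocation whose true efficiencies all equal $\tilde E_c$; this allocation is feasible for the honest problem~\eqref{eq:nonco} and attains common efficiency $\tilde E_c>E$, contradicting the maximality of $E$ from the first step. Hence $\tilde E_c\leq E$, so cheating cannot improve the cheater's true throughput.

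The step I expect to be the crux is the feasibility-and-rescaling argument: I must confirm that shrinking each allocation by $\lambda_l\le 1$ only frees resources (so~\eqref{eq:PEconstraintSP} is preserved) and that $E$ is a well-defined maximum over a nonempty feasible set, so that the contradiction bites. A secondary delicate point is the \emph{direction} of the monotonicity bound $\vec{W_c}\cdot\vec{x'_c}\le\vec{W'_c}\cdot\vec{x'_c}$, which is precisely where the hypothesis $\vec{W'_c}\succcurlyeq\vec{W_c}$ is used; reversing it would break the argument. This route has the advantage of being uniform in which user cheats and avoids the leftmost-fill case analysis of Lemma~\ref{theo:lemma1}. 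Finally, to match the stronger wording that the throughput strictly \emph{decreases}, I would note that whenever an inflated coordinate actually carries allocation, i.e. $x_c'^{\,j}>0$ with $w_c'^{\,j}>w_c^{\,j}$, the bound $\tilde E_c\le E'$ becomes strict and forces $\tilde E_c<E$; if no inflated coordinate is ever allocated, the misreport is immaterial and throughput is unchanged, so in all cases the cheater gains nothing.
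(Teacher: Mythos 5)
Your proof is correct and takes a genuinely different route from the paper's. The paper argues by cases on whether the honest users' common efficiency rises or falls after the misreport: the falling case is immediate from the equality constraint~\eqref{eq:eqinnonco}, but the rising case leans on the adjacency structure of Theorem~\ref{adjacent-oef}, restricts attention to small perturbations that stay below the next user's speedup, and works through a $2\times 2$ allocation template, asserting that ``the same method'' covers all remaining structures. You instead make the key observation that under~\eqref{eq:eqinnonco} the objective is $n$ times the common efficiency, so the truthful optimum $E$ is the \emph{maximum} equalizable efficiency; then the monotonicity bound $\vec{W_c}\cdot\vec{x'_c}\le\vec{W'_c}\cdot\vec{x'_c}=E'$ plus the down-rescaling $\lambda_l=\tilde E_c/(\vec{W_l}\cdot\vec{x'_l})\le 1$ (which only releases resources, so~\eqref{eq:PEconstraintSP} is preserved) exhibits a truthfully feasible allocation with common efficiency $\tilde E_c$, forcing $\tilde E_c\le E$. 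This is uniform over which user cheats and over large versus small misreports, needs no adjacency lemma, and is more airtight than the paper's case analysis; the paper's version, in exchange, gives structural intuition about how the allocation actually shifts. One small caveat: your final remark on strictness does not quite close --- $\tilde E_c<E'$ alone does not force $\tilde E_c<E$ when $E'>E$; to get a strict penalty you would additionally transfer an infinitesimal amount of resource from an honest user (who sits strictly above $E$ under the true speedups) to the cheater before rescaling, which pushes the equalized level strictly above $E$ and yields the contradiction. The weak inequality you do establish is exactly what \emph{strategy-proofness} (no benefit from lying) requires, so the main claim stands.
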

\begin{proof}
We establish the proof by examining the new allocation, denoted as  $X'$, that occurs after someone cheats. If the efficiency of honest users decreases, the cunning user $i$ will also be penalized due to the constraint of efficiency equality between users, i.e., $\vec{w_i}\cdot\vec{x'_i} < \vec{w'_i}\cdot\vec{x'_i} = \vec{w_k}\cdot\vec{x'_k}$ where $k$ denotes the honest user. This inequality ensures the property of being \textit{strategy-proof}. In another scenario, where the efficiency of honest users increases, we only have to analyze the case where user $i$ increases a small amount of acceleration rate and still does not exceed any speedup rates of others. Without loss of generality, assume user $i$ increases his speedup rate from $\vec{w_i}$ to $\vec{w_i'}$, which satisfies $\vec{w}_{i+1} \succcurlyeq \vec{w_i'} \succcurlyeq \vec{w_{i}}$. Following Theorem \ref{adjacent-oef}, the allocation undergoes the following changes:
\begin{equation*}
W=\begin{pmatrix}
    a_i & b_i\\
    a_{i+1} & b_{i+1}
\end{pmatrix} \
X=\begin{pmatrix}
    x_i & y_i\\
    0 & y_{i+1}
\end{pmatrix} \
X’=\begin{pmatrix}
    x_i' & y_i'\\
    0 & y_{i+1}'
\end{pmatrix},
\end{equation*}
where the bottom-left zero value in $X$ will remain unchanged in $X'$, otherwise the optimal resource efficiency cannot be attained. Although there are various possible structures for the allocation matrix before and after cheating, we can employ the same method to demonstrate the satisfaction of \textit{strategy-proofness} in all cases.  Specifically, in the aforementioned scenario, when user $i$ cheats a bit on his speedup vector, the benefits for honest users should solely arise from changes in the $x$ and $y$ components (corresponding to two columns in $X$).  Consequently, we have $y_{i+1}'>y_{i+1}$. Furthermore, let us define $x_i'=x_i-\epsilon_x$, and $y_i'=y_i-(y_{i+1}'-y_{i+1})-\epsilon_y$, where $\epsilon_x$ and $\epsilon_y$ represent the change in allocation to benefit honest users, except for user $i+1$. Based on these definitions, we deduce that $a_i\cdot x_i' + b_i\cdot y_i' < a_i\cdot x_i + b_i\cdot y_i$, implying the cunning user must be penalized. This completes the proof.
\end{proof}

\section{Evaluation}
In this section, we present the evaluation results of the OEF system. We begin by exploring the attainment of fairness properties and the enhanced resource efficiency offered by OEF throughout the entire scheduling period, considering both non-cooperative and cooperative scenarios in comparison to various baselines. To illustrate the realization of fairness properties, we conducted a small-scale experiment involving four tenants. Additionally, we employed a large-scale experiment encompassing a more extensive array of tenants and jobs to showcase the advantages in training throughput and job completion time (JCT) reduction. Furthermore, we shed light on the system overhead of OEF, illustrating its seamless integration into large-scale clusters.

\subsection{Experiment Setup}
\label{sec:setup}
\subsubsection{Environment} We conducted experiments on a real cluster consisting of eight RTX 3070 GPUs, eight 3080 GPUs, and eight 3090 GPUs, where four GPUs of the same type co-locate on the same host. Without special mention, we set 5 minutes as the length of the scheduling round by default.

\subsubsection{Workloads} To validate our scheme in real world, we picked several popular DL jobs from different domains. For image classification, we utilized VGG, ResNet, and DenseNet models on the CIFAR-100 dataset. In contrast, for language modeling tasks, we implemented LSTM, RNN, and Transformer models using the WikiText-2 dataset. To ensure consistency with real-world scenarios, each job was assigned a random combination of hyperparameters, including batch size and learning rate, within a reasonable range.

To more accurately mimic real-world scheduling scenarios when assessing JCT and throughput performance, we maintained cluster contention levels consistent with those observed in Microsoft's Philly trace~\cite{jeon2019analysis}.

\subsubsection{Baselines} To demonstrate the superiority of OEF, we compare it to the following schedulers:
\begin{itemize}
    \item $\bm{\mbox{Gandiva}_{fair}}$~\cite{Gandivafair}. $\mbox{Gandiva}_{fair}$ applies a trading-based algorithm to guarantee \textit{sharing-incentive} and improve the overall resource efficiency. Specifically, it conducts trading in a greedy way, i.e., always trades between shares with the greatest speedup gap. However, the second-price trading mechanism introduced in this user prevents users from obtaining the highest overall efficiency and also incentives users to lie.
    \item \textbf{Gavel}~\cite{gavel}. Gavel formulates a max-min optimization problem, which tries to maximize the throughput of the user with the least speedup. In this sense, Gavel can also provide users with \textit{sharing-incentive}.
\end{itemize}
In order to ensure a fair evaluation, we employ a consistent round-robin scheduling strategy for jobs within each tenant across all baselines. More precisely, scheduling priority is assigned to jobs with the longest starvation time.

\begin{figure}
\begin{minipage}[h]{0.49\linewidth}
\centering
\includegraphics[width=0.99\linewidth]{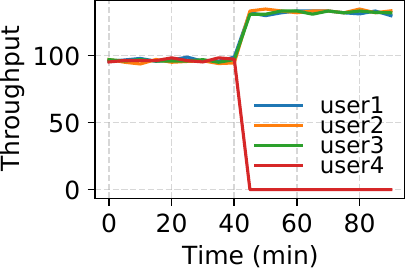}
\subcaption{No one cheats}
\label{fig:throughput_fluctuation}
\end{minipage}
\begin{minipage}[h]{0.49\linewidth}
\centering
\includegraphics[width=0.99\linewidth]{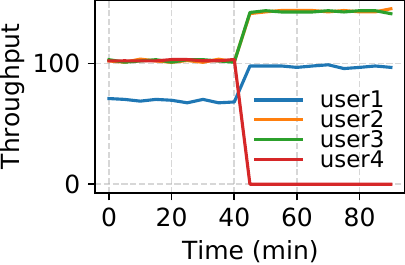}
\subcaption{User 1 cheats}
\label{fig:throughput_fluctuation_cheating}
\end{minipage}
\caption{OEF penalizes lying users. User 4 exits at the 40-th minute.}
\label{fig:sp_experiment}
\end{figure}

\subsubsection{Metrics}
\label{sec:metrics}
To measure fairness and efficiency, we employed normalized throughput as our primary performance metric. This involves normalizing the training throughput of users based on the allocation provided by the scheduler at a specific time unit, using the throughput of the same number of the slowest GPU devices as a reference. 
This normalization process is intended to eliminate the effect of diverse training speeds across different jobs. Furthermore, we performed ablation studies on the evaluation of throughput performance to showcase the individual contributions of the fair share evaluator and placer. We specifically designate the throughput estimate assessed by the OEF fair share evaluator as 'estimated' to demonstrate the algorithmic perspective's throughput advantage. Additionally, we label the tangible training throughput as 'actual', encompassing the throughput benefit from both algorithm design and placement scheme.

\subsection{Evaluation on Fairness Properties}
In this part, we present the attainment of fairness properties under OEF in both cooperative and non-cooperative environments.

\subsubsection{The power of strategy-proofness}
We examined the throughput across users in two scenarios: when no one cheats and when User-1 cheats, while maintaining the same speedup profile. As shown in Fig.~\ref{fig:sp_experiment}(a), when no one cheats on the speedup vector, four users achieve almost identical normalized progress under the non-cooperative OEF, despite some variations due to profiling errors and context switching costs. Moreover, even after the User-4 who executes a batch of VGG11 jobs quits the cluster at the 40th minute, other three users still obtain the same throughout. In Fig.~\ref{fig:sp_experiment}(b), we consider User-1 tuning LSTM jobs artificially increases his speedup rate. In this scenario, we observe that the deceptive user has been penalized, receiving less throughput than before cheating. This demonstrates the effectiveness of \textit{strategy-proofness} under OEF, which prevents a decrease in throughput performance when dealing with shrewd users. At the same time, we also notice that the throughput of honest users improves after someone cheats. Furthermore, it also shows that the throughput disadvantage of User-1 persists even after a user quits the cluster, and cheating reduces the overall training throughput by around 10\%.

\begin{figure}
\begin{minipage}[t]{0.49\linewidth}
\centering
\includegraphics[width=0.99\linewidth]{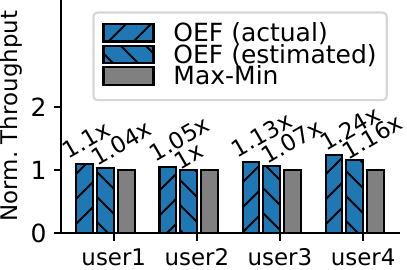}
\subcaption{\textit{Sharing incentive} under cooperative OEF}
\label{fig:si}
\end{minipage}
\begin{minipage}[t]{0.49\linewidth}
\centering
\includegraphics[width=0.99\linewidth]{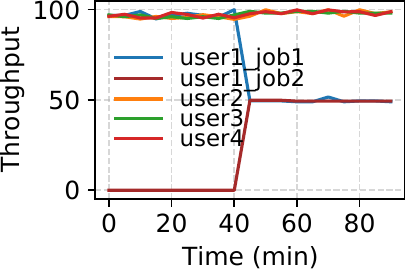}
\subcaption{Multiple types of jobs for user 1 under OEF where User 1 submits one more job at the 40-th minute}
\label{fig:throughput_fluctuation_vary_jobs}
\end{minipage}

\caption{Provision of sharing incentive and adaptation to multiple job types under OEF.}
\label{fig:noncooperative_attr}
\end{figure}

\subsubsection{Effectiveness of sharing incentive}
We assessed the \textit{sharing incentive} property of OEF in a cooperative environment by comparing the training throughput of four users when training different types of jobs under OEF. As depicted in Fig.~\ref{fig:noncooperative_attr}(a), the fair share evaluator in OEF consistently produces higher or equal estimated throughput in comparison to the Max-Min approach. Notably, OEF demonstrates its capacity to expedite User-4 the most, achieving up to a $1.16\times$ improvement in estimated training throughput. This acceleration is attributed to the cooperative nature of OEF, which prioritizes the training jobs of tenants with the highest speedup rate, thus maximizing the overall throughput improvement. Furthermore, the placer component in OEF introduces network contention alleviation, resulting in an additional training throughput improvement of $1.24\times$.

\subsubsection{Different speedups within a tenant}
To show the effective support for users to train different types of DL jobs under OEF, we conducted experiments on a dynamic trace, where User-1 adds another type of DL jobs on the fly. As shown in Fig.~\ref{fig:noncooperative_attr}(b), we observe that before introducing the new job type, User-1 receives almost the same throughput as others. After introducing a new type of jobs at 40 minutes, we investigate the throughput of these two types of User-1 separately. It is clear that these two types receive almost equal throughput. Additionally, each job from User-1 obtain half the training throughput of other users.

\subsubsection{Effectiveness of envy freeness}
We proceed to investigate the effectiveness of \textit{envy-freeness} under cooperative OEF. To be specific, we compared the progress obtained by each user to that generated by having resource allocation from others. As illustrated in Fig.~\ref{fig:ef}, users under cooperative OEF are all provided with the most suitable allocation that yields the highest throughput, i.e., no allocation vector from other users results in higher throughput. For instance, User-4 exhibits an estimated throughput that is $1.58\times$ greater than that achieved using the allocation assigned to User-1. To this end, it is unnecessary for users to envy others under the cooperative OEF.

\begin{figure}
\centering
\includegraphics[width=0.96\linewidth]{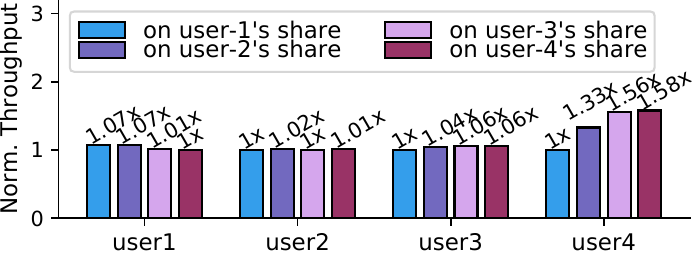}
\caption{Estimated training throughput of allocation for each user from other's perspective under cooperative OEF, which demonstrates that one wouldn't prefer the allocation of others, achieving the \textit{envy freeness}.}
\label{fig:ef}
\end{figure}

\begin{figure}
\begin{minipage}[h]{0.49\linewidth}
\centering
\includegraphics[width=0.99\linewidth]{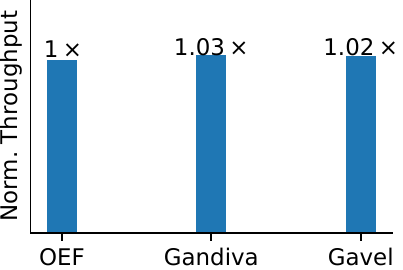}
\subcaption{Estimated throughput}
\label{fig:overall_effi}
\end{minipage}
\begin{minipage}[h]{0.49\linewidth}
\centering
\includegraphics[width=0.99\linewidth]{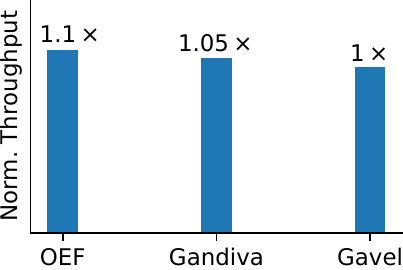}
\subcaption{Actual throughput}
\label{fig:overall_effi_sys}
\end{minipage}
\caption{Training throughput improvement under non-cooperative setting.}
\label{fig:noncooperative_throughput}
\end{figure}

\subsection{Resource Efficiency}
\subsubsection{Throughput evaluation}
We conducted an evaluation involving 20 tenants, each owning DL training jobs of the same type. This ensures a fair comparison with Gavel and $\mbox{Gandiva}_{fair}$, which do not support multiple job types within a tenant. As depicted in Fig.~\ref{fig:noncooperative_throughput}, the non-cooperative OEF attains efficiency levels comparable to baselines, while maintaining a \textit{strategy-proofness} property. Through meticulous placement design in the placer, OEF enhances training throughput, achieving up to a 10\% actual increase in training throughput compared to Gavel and $\mbox{Gandiva}_{fair}$, both of which lack optimization strategies for placement, including network contention alleviation and mechanisms to prevent excessive GPU allocation across diverse types.

In Fig.~\ref{fig:cooperative_overall_effi}, the throughput advantage of cooperative OEF over other baselines is evident, primarily attributed to the effective efficiency optimization within the OEF framework. Specifically, OEF outperforms baselines by 20\% in terms of estimated throughput due to the algorithm design. This significant improvement is further amplified to a 32\% actual throughput increase, underscoring the impact of OEF's meticulous placement design.

\begin{figure}[t]
\begin{minipage}{0.49\linewidth}
\centering
\includegraphics[width=0.99\linewidth]{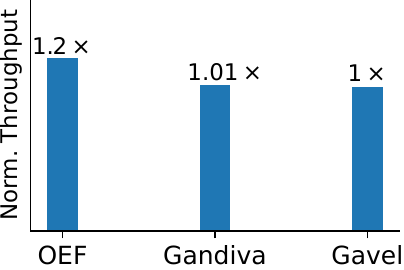}
\subcaption{Estimated throughput}
\end{minipage}
\begin{minipage}{0.49\linewidth}
\centering
\includegraphics[width=0.99\linewidth]{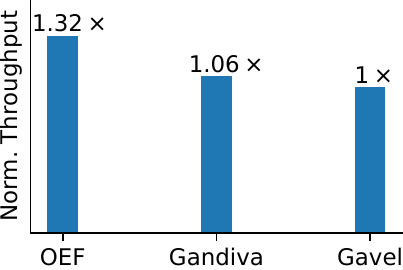}
\subcaption{Actual throughput}
\end{minipage}
\caption{Training throughput improvement under cooperative setting.}
\label{fig:cooperative_overall_effi}
\end{figure}

\subsubsection{JCT evaluation}
\label{sec:longtermeval}
We conducted a three-day-long evaluation to showcase the efficacy of reducing JCT in our GPU cluster. The experiment involved jobs from 50 tenants, with each tenant averaging 20 jobs of the same type. Tenants exited the cluster upon the completion of all their jobs. As illustrated in Fig.~\ref{fig:overall_jct}, OEF surpasses its counterparts, Gandiva$_{\mbox{fair}}$ and Gavel, in reducing job training time by 17\% and 19\%, respectively. This JCT advantage is attributed not only to throughput improvements but also to the advanced rounding strategy implemented in the placer. The latter enables tenants to progress jobs even when allocated GPU resources are less than their job expectations, thereby diminishing starvation time and resulting in overall JCT reduction.

\begin{figure}
\centering
\includegraphics[width=0.5\linewidth]{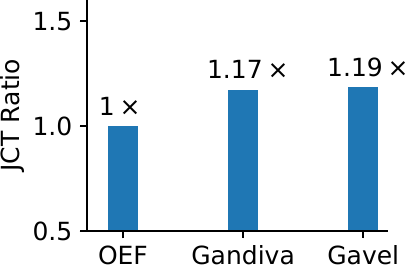}
\caption{OEF can reduce overall JCT by 17\% and 19\%, in contrast to $Gandiva_{fair}$ and Gavel.}
\label{fig:overall_jct}
\end{figure}

\subsubsection{Straggler effect alleviation}
We performed an ablation study to demonstrate the extent to which OEF can alleviate the straggler effect. Specifically, we tracked the occurrences of cross-GPU-type placement events across different baselines, where higher-end GPU resources remain idle during periodic data synchronizations with lower-end GPUs. In comparison to Gandiva$_{\mbox{fair}}$ and Gavel, OEF reduces the number of workers impacted by the straggler effect by 14\% and 26\%, respectively.

\subsection{System Overhead}
\subsubsection{Computation overhead}
As scalability is a major concern in large-scale cluster scheduling, we demonstrated that OEF will not block scheduling given a large cluster size. In particular, because solving the optimization problem is of the most time-consuming part under OEF, we carefully investigated its computation overhead at different cluster sizes. Specifically, we fix the number of GPU types to ten, which is sufficient for most production clusters. As shown in Fig.~\ref{fig:sensitivity}(a), we can see that cooperative OEF consumes more time than the non-cooperative version. The reason is that cooperative OEF has $O(n^2)$ constraints while the non-cooperative version has only $O(n)$ constraints, where $n$ is the number of users. Considering that scheduling rounds are several minutes long, this scheduling overhead is negligible.

\subsubsection{Sensitivity analysis}
\label{sec:sensitivity}
As profiling cannot always provide users with strictly accurate runtime information about their jobs, schedulers should be less sensitive to the profiling errors. To this end, we investigated the efficiency deviation, i.e., the gap between the throughput OEF should achieve based on the speedup profile reported by the users and the throughput OEF has actually achieved in the cluster. As shown in Fig.~\ref{fig:sensitivity}(b), even given a profiling error up to 20\%, OEF only yields a deviation about 3\%, demonstrating its robustness in the presence of profiling error.

\begin{figure}[!tb]
\begin{minipage}[!htb]{0.49\linewidth}
\centering
\includegraphics[width=0.99\linewidth]{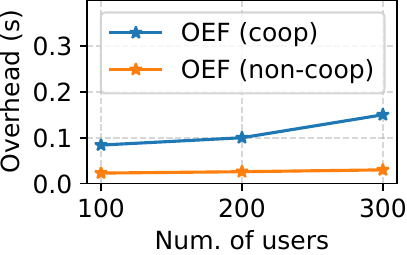}
\subcaption{Computation overhead}
\end{minipage}
\begin{minipage}[!htb]{0.49\linewidth}
\centering
\includegraphics[width=0.99\linewidth]{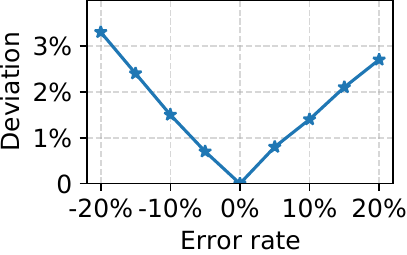}
\subcaption{Throughput deviation}
\end{minipage}
\caption{The scalability of OEF and robustness in the presence of profiling errors.}
\label{fig:sensitivity}
\end{figure}

\section{Related Works}

\noindent\textbf{Heterogeneous DL schedulers.} As for heterogeneity-aware scheduling, AlloX~\cite{le2020allox} designs an efficient scheduling algorithm based on bipartite matching to reduce the overall JCT in an offline setting. To systematically deal with the effect introduced by performance heterogeneity, Gavel~\cite{gavel} proposes an optimization framework to yield good efficiency or fairness. By contrast, Gandiva$_{\mbox{fair}}$ focuses on fair scheduling in heterogeneous GPU clusters, and provides several fairness properties. In addition to GPU resources, several works also take other resources such as CPU and memory into consideration when improving job training performance~\cite{looking-beyond, sun2019fair}.

\noindent\textbf{Fair schedulers.}
Many schedulers have been designed to improve and extend max-min fairness in the literature~\cite{demers1989analysis, parekh1992generalized, zhang1996wf2q, goyal1996start, shreedhar1995efficient, stoica1997hierarchical, baruah1993proportionate, baruah1995fast, zhu2003multiple, blanquer2001fair, kleinberg1999fairness}. These studies primarily focus on the single-resource-type environment, commonly applied in operating system and network-related contexts. As for multi-resource fairness, DRF~\cite{DRF} and its variants~\cite{hug,carbyne,joe2013multiresource,wang2014multi,khamse2017per,wang2016multi} becomes the mainstream manner to provide fairness for tenants. However, all of these schedulers do not well fit into heterogeneous GPU scheduling setting. Themis~\cite{themis} propose a \textit{finish-time} fair scheduler for homogeneous GPU cluster, where jobs are provided \textit{sharing incentive} in best-effort manner. Gandiva$_{\mbox{fair}}$ and Gavel~\cite{Gandivafair,gavel} are the only two schedulers that has similar targets with OEF, with few fairness properties and non-optimal efficiency provided.


\section{Conclusion and Remarks}
In this paper, we have identified the inherent conflicts between resource efficiency and various fairness properties in the context of heterogeneous GPU scheduling. Based on this recognition, we design a framework OEF to explore the optimal balance between these conflicting objectives. To the best of our knowledge, OEF is the first solution capable of achieving \textit{strategy-proofness} while utilizing interchangeable resources. Furthermore, OEF not only significantly enhances training throughput but also demonstrates strong scalability for supporting distributed DL training in large-scale clusters.


OEF can be extended to support job-level fairness.
With the increasing popularity of elastic DL training~\cite{AFS}, OEF can leverage this training paradigm to dynamically assign GPU resources to each job, while simultaneously ensuring fairness in the allocation process.


\begin{acks}
We thank the anonymous reviewers, and our shepherd, Prof. Robert Birke for their helpful comments. This work is supported in part by the Science and Technology Development Fund of Macau (0024/2022/A1, 0071/2023/ITP2), the Multi-Year Research Grant of University of Macau (MYRG2022-00119-FST, MYRG-GRG2023-00019-FST-UMDF).
\end{acks}

\bibliographystyle{plain}
\bibliography{OEF}


\end{document}